\newcommand{\eat}[1]{}
\newcommand{\Var}[1]{\text{Var}\left(#1\right)}
\newenvironment{sql}%
 {\vskip 5pt\begin{list}{}{%
  \setlength{\topsep}{0pt}\setlength{\partopsep}{0pt}\setlength{\parskip}{0pt}%
  \setlength{\parsep}{0pt}\setlength{\labelwidth}{0pt}%
  \setlength{\rightmargin}{0pt}\setlength{\leftmargin}{0pt}%
  \setlength{\labelsep}{0pt}%
  \obeylines\@vobeyspaces\normalfont\ttfamily%
  \item[]}}
 {\end{list}\vskip5pt\noindent}
\newtheorem{thm}{Theorem}
\begin{document}

\date{February 2017}

\title{OLA-RAW: Scalable Exploration over Raw Data}

\author{
Yu Cheng \hspace*{2cm} Weijie Zhao \hspace*{2cm} Florin Rusu\\
       \small{University of California Merced}\\
       \small{5200 N Lake Road}\\
       \small{Merced, CA 95343}\\
       \small\texttt{\{ycheng4, wzhao23, frusu\}@ucmerced.edu}
}

\maketitle

\begin{abstract}

In-situ processing has been proposed as a novel data exploration solution in many domains generating massive amounts of raw data, e.g., astronomy, since it provides immediate SQL querying over raw files. The performance of in-situ processing across a query workload is, however, limited by the speed of full scan, tokenizing, and parsing of the entire data. Online aggregation (OLA) has been introduced as an efficient method for data exploration that identifies uninteresting patterns faster by continuously estimating the result of a computation during the actual processing---the computation can be stopped as early as the estimate is accurate enough to be deemed uninteresting. However, existing OLA solutions have a high upfront cost of randomly shuffling and/or sampling the data.
In this paper, we present OLA-RAW, a bi-level sampling scheme for parallel online aggregation over raw data. Sampling in OLA-RAW is query-driven and performed exclusively in-situ during the runtime query execution, without data reorganization. This is realized by a novel resource-aware bi-level sampling algorithm that processes data in random chunks concurrently and determines adaptively the number of sampled tuples inside a chunk. In order to avoid the cost of repetitive conversion from raw data, OLA-RAW builds and maintains a memory-resident bi-level sample synopsis incrementally. We implement OLA-RAW inside a modern in-situ data processing system and evaluate its performance across several real and synthetic datasets and file formats. Our results show that OLA-RAW chooses the sampling plan that minimizes the execution time and guarantees the required accuracy for each query in a given workload. The end result is a focused data exploration process that avoids unnecessary work and discards uninteresting data.

\end{abstract}

\section{INTRODUCTION}\label{sec:intro}

In the era of data deluge, massive amounts of \textit{raw data} are generated at an unprecedented scale by mobile applications, sensors, and scientific experiments. The vast majority of these read-only data are stored as application-specific files containing millions -- if not billions -- of records. \textit{Data exploration} is the initial step in extracting knowledge from these data. Aggregate statistics are computed in order to assess the quality of the raw data, before a thorough investigation on transformed data is performed. The main goal of data exploration is to determine if the time-consuming data transformation and in-depth analysis are necessary. Thus, data exploration does not have to be exact. As long as accurate \textit{estimates} that guide the decision process are generated, its goal is achieved. This allows for an extensive set of optimization strategies that reduce I/O and CPU utilization to be employed. However, if the detailed analysis is triggered, the work performed during exploration should allow for \textit{incremental} extension. In order to illustrate these concepts, we provide an example from a real application in astronomy.

\textbf{Motivating example.}
The Palomar Transient Factory\footnote{\url{www.astro.caltech.edu/ptf/}} (PTF) project~\cite{ptf:overview} aims to identify and automatically classify transient astrophysical objects such as variable stars and supernovae in real-time. A list of potential transients -- or candidates -- is extracted from the images taken by the telescope during a night. They are stored as a table in one or more FITS\footnote{\url{http://heasarc.gsfc.nasa.gov/fitsio/}} files. The initial stage in the identification process is to execute a series of aggregate queries over the batch of extracted candidates. This corresponds to data exploration. The general \texttt{SQL} form of the queries is:

\begin{minipage}{.45\textwidth}
\begin{sql}
SELECT AGGREGATE(expression) AS agg
FROM candidate
WHERE predicate
HAVING agg < threshold
\end{sql}
\end{minipage}\hfill
\begin{minipage}{.5\textwidth}
where \textit{AGGREGATE} is SUM, COUNT, or AVERAGE and \textit{threshold} is a verification parameter. These queries check certain statistical properties of the entire batch and are executed in sequence---a query is executed only if all the previous queries are satisfied.
\end{minipage}\hfill
If the candidate batch passes the verification criteria, an in-depth analysis is performed for individual candidates. The entire process -- verification and in-depth analysis -- is executed by querying a PostgreSQL\footnote{\url{http://www.postgresql.org/}} database---only after the candidates are loaded from the original FITS files. This workflow is highly inefficient for two reasons. First, the verification cannot start until data are loaded. Second, if the batch does not pass the verification, both the time spent for loading and the storage used for data replication are wasted.

\begin{figure*}[htbp]
\begin{minipage}[htbp]{.32\textwidth}
	\centering
	\includegraphics[width=\textwidth]{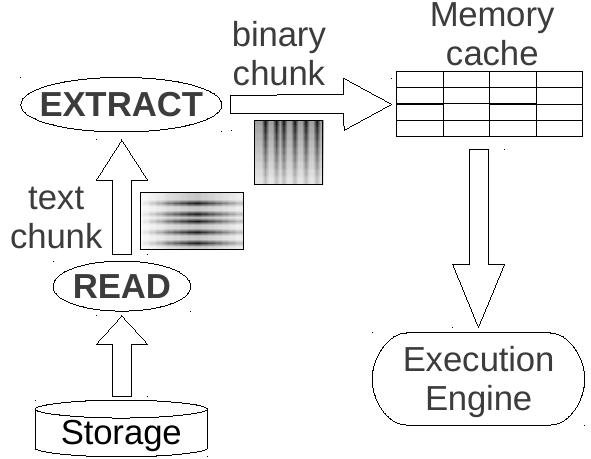}
	\caption{Raw data processing.}
	\label{fig:scanraw}
\end{minipage}\hfill
\begin{minipage}[htbp]{.32\textwidth}
	\centering
	\includegraphics[width=\textwidth]{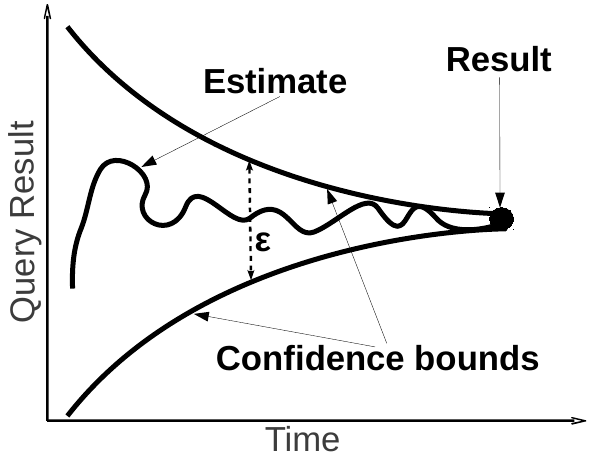}
	\caption{Online aggregation.}
	\label{fig:ola}
\end{minipage}\hfill
\begin{minipage}[htbp]{.32\textwidth}
	\centering
	\includegraphics[width=\textwidth]{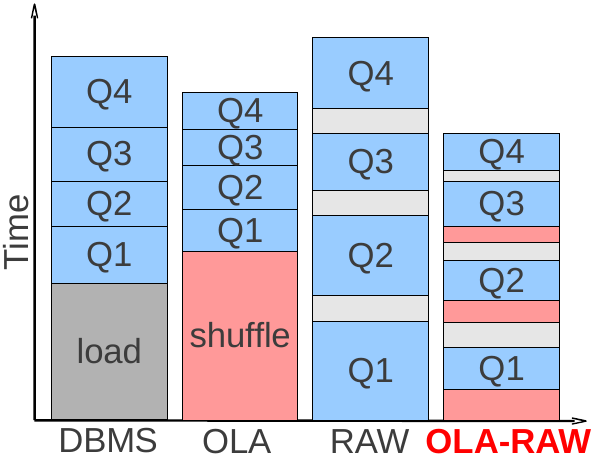}
	\caption{OLA-RAW approach.}
	\label{fig:ola-raw:high-level}
\end{minipage}\hfill
\end{figure*}

\textbf{Raw data processing.}
To reduce the high upfront database loading cost, multiple raw data processing systems have been recently introduced~\cite{files-queries-results,nodb,instant-loading,data-vaults,scanraw,invisible-loading}. They are extensions of the external table mechanism supported by standard database servers~\cite{oracle:external-tables,mysql:external-tables}. These systems execute \texttt{SQL} queries directly over raw data while optimizing the conversion process into the format required by the query engine (Figure~\ref{fig:scanraw}). This eliminates loading and provides instant access to data---verification can start immediately in our example. However, since verification consists of more than a single query, the overall time incurred by raw data processing can be larger than in a database because full access to the raw data is required for every query. Several systems~\cite{nodb,invisible-loading,scanraw} provide a dynamic tradeoff between the time to access data and the query execution time by adaptively loading a portion of the data during processing. This allows for gradually improved query execution times while reducing the amount of storage for replication. However, these systems are data-agnostic and cannot identify uninteresting patterns early in the processing. This results in wasted CPU and storage resources.

\textbf{Online aggregation.}
Since the goal of data exploration -- batch verification in our example -- is only to determine if the individual candidate in-depth analysis is necessary, it is not mandatory to evaluate each query in the sequence to completion. As early as the relationship between the aggregate and the verification threshold can be accurately inferred, the query can be stopped. This relationship can be determined by using only an estimate of the aggregate. If the aggregate \texttt{agg} -- or its estimate -- is larger than the threshold, the verification fails and no in-depth analysis is required. Otherwise, we can proceed to the subsequent query in the verification.
Online aggregation (OLA)~\cite{ola,dbo,AQP-book} provides a sound framework to reason about the aggregate estimation involved in verification. The main idea in OLA is to estimate the query result based on a sample of the data. In addition to the estimator, OLA defines a principled method to derive confidence bounds that permit the correct identification of the relationship with the threshold. The estimator and the bounds are computed from a sample much smaller in size than the overall dataset, thus reducing the execution time of a verification query tremendously (Figure~\ref{fig:ola}).
The existing OLA solutions have strict requirements imposed by the sampling procedure. Data shuffling~\cite{demo:dbo,continuous-sampling,pfola:dapd} is considered the standard procedure to extract samples of increasing size from a dataset. Shuffling generates a permutation of the data as a query preprocessing step such that a runtime sequential scan results in random samples of increasing size. However, shuffling creates a secondary copy of the data and incurs significant processing time---even more than loading.

\textbf{Problem \& approach.}
At high-level, our objective is to \textit{optimally execute exploration over raw data in a shared-memory multi-core environment where I/O operations are overlapped with extraction and several chunks can be processed concurrently while minimizing resource utilization}. In our concrete example, this corresponds to low execution time for the verification process without incurring any loading cost.

Our approach is to seamlessly integrate online aggregation into raw data processing such that we cumulate their benefits. Figure~\ref{fig:ola-raw:high-level} illustrates the intuition behind the proposed OLA-RAW solution with respect to standard database processing (DBMS), online aggregation (OLA), and raw data processing with adaptive loading (RAW), respectively. Similar to RAW, OLA-RAW distributes data loading across the query workload. Notice, though, that loading in RAW -- and, by extension, in OLA-RAW -- corresponds to caching data in memory, not necessarily materializing on secondary storage. The same idea is extended to shuffling. Instead of randomly permuting all the data before performing online aggregation, OLA-RAW partitions shuffling across the queries in the workload. Moreover, loading and shuffling are combined incrementally such that loaded data do not require further shuffling. Essentially, \textit{OLA-RAW provides a resource-aware parallel mechanism to adaptively extract and incrementally maintain samples from raw data}. The end goal is to reduce the high upfront cost of loading (DBMS) and shuffling (OLA), and to minimize the amount of data accessed by RAW, as long as estimates are accepted by the user---the general situation in data exploration.

\textbf{Challenges.}
The realization of OLA-RAW poses a series of difficult challenges. First and foremost, an efficient sampling mechanism targeted at raw data has to be devised. The sampling mechanism has to work in-place, over data in the original format. It has to minimize the amount of raw data read and/or extracted into the processing representation since these are the fundamental limitations of raw data processing. Given our focus on parallel processing, the sampling mechanism has to cope with the so called ``inspection paradox''~\cite{online-mapreduce}. Since the estimate is correlated with the extraction time, the order in which chunks are considered has to be the same with the extraction scheduling order. The second challenge is defining and analyzing estimators for the sampling mechanism. In order to be amenable to online aggregation, the estimators have to be integrated in the sampling mechanism and they have to support incremental computation over samples of increasing size. A third challenge corresponds to the incremental maintenance of samples. Since extracting samples from raw data is expensive, a mechanism that preserves them in memory for further use in subsequent queries and maintains them incrementally is necessary. This has to be realized efficiently---the goal is to compute the estimate as fast as possible, not to maintain the sample. From an implementation perspective, the integration of online aggregation into a resource-aware raw data processing system is always challenging because of the complex interactions between I/O, extraction, sampling, and estimation.

\textbf{Contributions.}
The main contribution of this paper is a novel bi-level sampling scheme for OLA-RAW that addresses the aforementioned challenges. OLA-RAW sampling is query-driven and performed exclusively in-situ during query execution, without data reorganization. In order to avoid the expensive conversion cost, OLA-RAW builds and maintains incrementally a memory-resident bi-level sample synopsis. These are achieved through a series of technical contributions detailed in the following:
\begin{compactitem}
\item We define online aggregation over raw data in a multi-core shared-memory setting (Section~\ref{sec:preliminaries}).
\item We provide a parallel chunk-level sampling mechanism that avoids the inherent inspection paradox (Section~\ref{sec:chunk-sampling}).
\item We introduce a novel parallel bi-level sampling scheme that supports continuous estimation -- not only at chunk boundaries -- during the online aggregation process (Section~\ref{sec:bilevel-sampling}).
\item We devise a resource-aware policy to determine the optimal chunk sample size for the proposed bi-level sampling scheme (Section~\ref{sec:ola-raw-sampling}).
\item We design a memory-resident bi-level sample synopsis that is built and maintained incrementally following a variance-driven strategy (Section~\ref{sec:synopsis}).
\end{compactitem}
We implement OLA-RAW inside a state-of-the-art in-situ data processing system and evaluate its performance across several real and synthetic datasets and file formats. We investigate the importance of each OLA-RAW component as well as the overall solution. Our results (Section~\ref{sec:experiments}) show that OLA-RAW chooses the sampling plan that minimizes the execution time and guarantees the required accuracy for each query in a given workload.

\section{PRELIMINARIES}\label{sec:preliminaries}

In this section, we introduce query processing over raw data and online aggregation. We define the online aggregation over raw data problem and identify the challenges to be addressed by a coherent solution that integrates the two.

\subsection{Parallel Raw Data Processing}\label{sec:prelim:in-situ}

Raw data processing is depicted in Figure~\ref{fig:scanraw}. The input to the process is a raw file from a non-volatile storage device, e.g., disk or SSD, a schema that can include optional attributes, and a procedure to extract tuples with the given schema from the raw file. The output is a tuple representation that can be processed by the query engine and, possibly, is cached in memory. In the \texttt{READ} stage, data are read from the original raw file, chunk-by-chunk, using the file system's functionality. A chunk contains multiple records and represents the unit of processing. Without additional information about the structure or the content -- stored inside the file or in some external structure -- the entire file has to be read the first time it is accessed. \texttt{EXTRACT} transforms tuples from raw format into the processing representation based on the schema provided and using the extraction procedure given as input to the process. There are two main tasks in \texttt{EXTRACT}. The first is to identify the schema attributes and output a vector containing the starting position for every attribute in the tuple---or a subset, if the query does not access all the attributes. Second, attributes are converted from the raw format to their corresponding binary type and mapped to the processing representation of the tuple---the record in a row-store, or the array in column-stores, respectively. At the end of \texttt{EXTRACT}, data are loaded in memory and ready for query processing. In this paper, we consider parallel raw data processing in the context of the SCANRAW operator~\cite{scanraw,scanraw:tods} and NoDB~\cite{nodb}. SCANRAW overlaps the I/O operations with \texttt{EXTRACT} over multiple chunks in a super-scalar pipeline architecture, i.e., multiple chunks are extracted concurrently. NoDB caches binary chunks in memory to avoid subsequent extraction.

\subsection{Online Aggregation on Raw Data}\label{sec:prelim:online-agg}

We consider online aggregation over a single table $T$ stored in some arbitrary sequential raw format, e.g., CSV, JSON, or FITS, and general aggregate queries of the form:

\begin{minipage}{.4\textwidth}
\begin{sql}
SELECT AGGREGATE(expression)
FROM T
WHERE predicate
\end{sql}
\end{minipage}\hfill
\begin{minipage}{.55\textwidth}
where \textit{AGGREGATE} is one of SUM, COUNT, or AVERAGE and \textit{expression} is a numeric expression, such as $T.a$ or $(T.a-T.b)^{2}$, that involves one or more columns of $T$. \end{minipage}\hfill
These aggregation functions are the most commonly used in practice. Online aggregation over a single raw data source is a fundamental problem arising not only in queries that explicitly involve a single table -- this is the standard scenario in raw data processing -- but also in queries on ``star'' schemas that consist of a massive ``fact'' table -- which is sampled -- and many smaller ``dimension'' tables---which are cached in memory. In general, the proposed methods apply to queries that involve joins between multiple tables, provided that sampling is performed on exactly one raw data source and each join attribute is a foreign key. \textit{GROUP BY} queries can also be handled using the methods in this paper by simply treating each group as a separate query and running all the queries simultaneously. A group-specific version of the \textit{predicate} that accepts only tuples from that particular group is required for each separate query.

In addition to the query, an online aggregation user is typically required to specify two parameters. The \textit{accuracy $\epsilon$} determines when the query can be stopped. Different from one-time estimation~\cite{AQP-book,TamingTerabytes} that might produce inaccurate estimates not satisfying a given $\epsilon$, OLA is an iterative process in which a series of estimators with improving accuracy are generated. This is accomplished by including more data in estimation, i.e., increasing the sample size, from one iteration to another. As more data are processed towards computing the result, the accuracy of the estimator improves accordingly (Figure~\ref{fig:ola}). The \textit{estimation time interval $\delta$} specifies how often the estimate and corresponding confidence bounds are computed. The user can decide to stop the query at any time -- even when the accuracy is not satisfied -- based on the returned estimate and confidence bounds.

Extracting random samples of increasing size at runtime is a complex time-consuming process~\cite{olken-phd}. This is the reason why existing procedures require a certain level of preprocessing. In offline sampling, a series of samples with progressively increasing size -- the largest one being the entire dataset -- are taken, e.g., BlinkDB~\cite{blink}. These samples are stored and processed as independent entities. Offline tuple-based shuffling~\cite{demo:dbo,continuous-sampling,pfola:dapd} guarantees that a runtime sequential scan produces samples of increasing size. In addition to preprocessing time, the offline methods incur a heavy storage overhead. These are in contrast with the requirements of in-situ data processing. Chunk-level~\cite{AQP-book}, i.e., block-level~\cite{online-mapreduce} or cluster~\cite{sampling}, sampling samples over the chunk space instead of the tuple space. This can be done efficiently online -- randomly permute the processing order of chunks -- without any preprocessing. However, chunk-level sampling incurs a higher processing cost because all the tuples inside a chunk have to be included in estimation. While this may be irrelevant for database processing, it is of great importance for in-situ processing due to the \texttt{EXTRACT} stage.

In this paper, we consider chunk-level sampling in the context of parallel raw data processing, specifically the SCANRAW operator. This creates problems because the random chunk order interacts with parallel processing. This, in turn, can trigger the inspection paradox which makes sampling-based estimation impossible. The only solution that addresses this problem in a distributed MapReduce setting is given in~\cite{online-mapreduce}. It defines a multivariate distribution that incorporates several timing parameters in addition to the aggregate chunk value. Since we focus on multi-thread parallelism in a shared-memory setting, the timing parameters are too similar to have a discriminative effect. Moreover, we can take advantage of the centralized shared-memory environment to eliminate the inspection paradox without resorting to expensive distributed synchronization.

\eat{
Algorithm~\ref{alg:query-process} describes the standard OLA process for query $Q$ which is an iterative procedure, in each iterate the system generates new samples and produce the according approximate result. If the estimation is good enough, the system could immediately stop processing and send the result to the user. Compared to normal query execution, the system reduced the size of processed data, so user could get an approximate answer with error bound much faster.
\begin{algorithm}
	\caption{Query Process(Q)}\label{alg:query-process}
	\begin{algorithmic}[1]
		\STATE S = $\emptyset$
		\WHILE{$TRUE$}
		\STATE S = S $\cup$ getSamples(Q, $\theta$)
		\STATE Estimation = Process(S) with accuracy $\hat{\theta}$
		\IF{$\hat{\theta} \geqq \theta$}
		\STATE	return Estimation
		\ENDIF
		\ENDWHILE
		
	\end{algorithmic}
\end{algorithm}
}

\section{CHUNK-LEVEL SAMPLING}\label{sec:chunk-sampling}

In this section, we give a simple solution that extends chunk-level sampling to parallel online aggregation over raw data. As far as we know, this is the first attempt to parallelize chunk-level sampling in a shared-memory setting.

Chunk-level sampling can be implemented similarly to a random data scan. The random order in which chunks are read from storage is determined before query execution starts (Figure~\ref{fig:sampling-generation}). At runtime, chunks are fully processed by computing the aggregate function and any additional statistics required for estimation over all the tuples in the chunk. Essentially, a single value is generated for the entire chunk. For example, if \texttt{SUM(B)} has to be estimated in Figure~\ref{fig:sampling-generation}, the value corresponding to chunk 2 is $110.931$. This is what is used in estimation. At each step, the chunks processed represent a simple random sample without replacement over the space of all chunks in $T$. As more chunks are processed, the size of the sample increases. This guarantees that the accuracy of the estimate improves, i.e., the width of the confidence bounds shrinks. 

Chunk-level sampling does not require any preprocessing, thus it fits perfectly in the in-situ raw data processing setting. It also minimizes the number of chunks read from storage when compared to tuple-based sampling. However, the accuracy of chunk-level sampling can be orders of magnitude lower than that of tuple-based sampling for the same number of processed tuples~\cite{bi-level-bernoulli}. This is of critical importance for in-situ processing because the gap between I/O bandwidth and CPU utilization is less wide than in standard databases---\texttt{EXTRACT} can be a time-consuming CPU task. Consider, for example, processing a JSON file with a deep schema. Extracting all the objects from a chunk can easily become the bottleneck, making the entire process CPU-bound. Since chunk-level sampling requires more tuples to be processed in order to achieve the same accuracy, it is very likely that it also incurs a higher execution time than tuple-based sampling.

Parallel in-situ processing systems overlap \texttt{READ} and \texttt{EXTRACT}, while also performing multiple \texttt{EXTRACT} instances concurrently. As long as sufficient CPU threads are available, multiple chunks can be extracted in parallel. Due to variations in extraction time across chunks, e.g., different number of tuples, length of attributes, predicate selectivity, the order in which chunks are returned by \texttt{EXTRACT} can be different from the predefined random order. This renders chunk-level sampling estimation impossible due to the inspection paradox. The only solution to avoid this problem is to reorder the chunks before they are included in the estimation---add a synchronization barrier after \texttt{EXTRACT}. For example, in Figure~\ref{fig:sampling-generation} chunk-level estimation is possible only after chunk 2 is extracted, even though chunks 3 and 1 have already been processed---they are scheduled after chunk 2. While reordering eliminates the inspection paradox, it can exacerbate the standard error swings characteristic to chunk-level sampling~\cite{bi-level-bernoulli}. Moreover, it can introduce variable-length gaps, i.e., step behavior, in estimation---there can be long time intervals over which there is no change in the estimator. These diminish the applicability of chunk-level sampling to parallel online aggregation over raw data.

\begin{figure*}[htbp]
\begin{center}
\includegraphics[width=\textwidth]{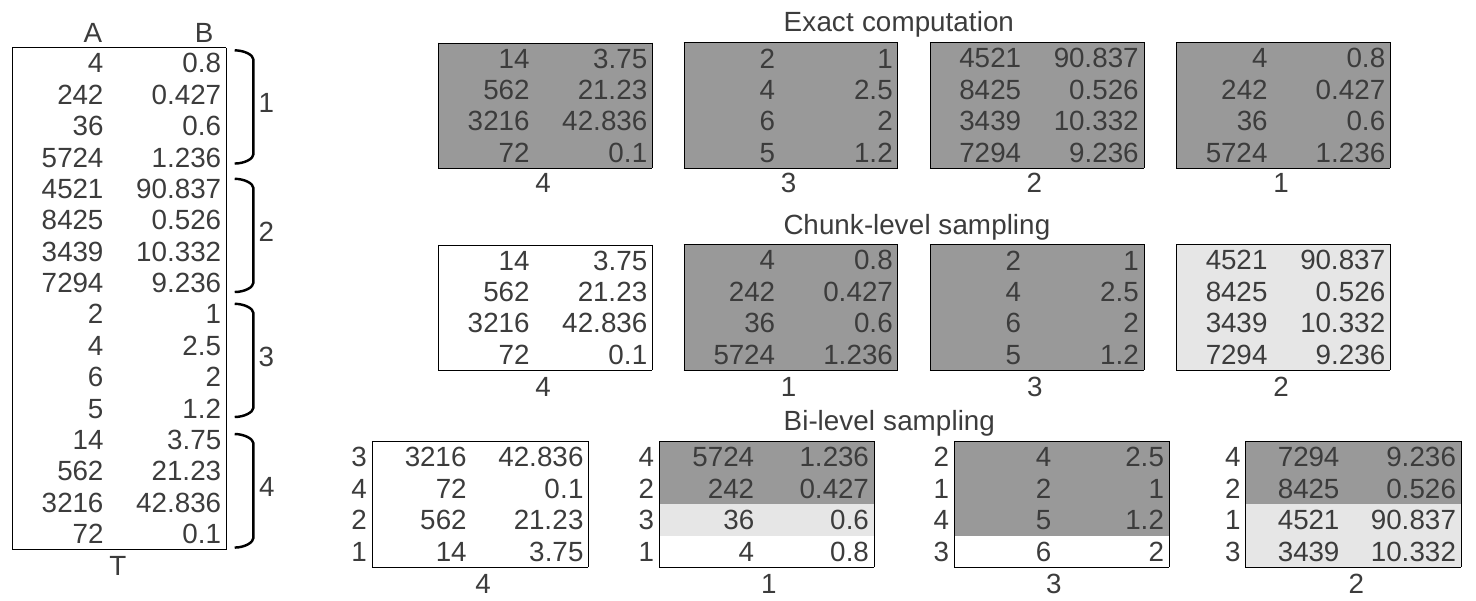}
\caption{Sampling strategies for online aggregation over raw data. The dark gray portions correspond to data included in estimation. The light gray portions correspond to data under processing, but not yet included in estimation. The white portions correspond to unprocessed data. Chunks are scheduled for processing from right to left.}
\label{fig:sampling-generation}
\end{center}
\end{figure*}

\section{BI-LEVEL SAMPLING}\label{sec:bilevel-sampling}

In this section, we introduce a novel bi-level sampling scheme for parallel online aggregation. The proposed scheme differs significantly from bi-level Bernoulli sampling~\cite{bi-level-bernoulli} -- the only work that applies bi-level sampling to databases we are aware of -- in which the goal is to extract a one-time Bernoulli sample with rate $q$. Our objective is to design an incremental sampling procedure that supports adaptive sample size increase in order to achieve continuous accuracy improvement.

\subsection{Sequential Procedure}\label{sec:bilevel-sampling:seq-procedure}

We propose the following bi-level sampling scheme. Chunks are read in a predetermined random order similar to chunk-level sampling. However, instead of aggregating all the tuples in the chunk into a single value that becomes a surrogate for the chunk, a secondary sampling process is performed over the tuples in the chunk. This is realized by randomly shuffling the order in which tuples are extracted. Independent orders are used across chunks. Since this is done in memory, no significant overhead is incurred. Figure~\ref{fig:sampling-generation} depicts the entire procedure.

At each step during the process, the set of sampled tuples correspond to a bi-level -- or two-stage -- sample without replacement~\cite{sampling-techniques,sampling}. This provides more flexibility over chunk-level sampling since estimates can be computed at any point in the process---not only at chunk boundaries. Moreover, the sample size can be increased either by including more chunks or more tuples inside a chunk. It is important to notice that bi-level sampling degenerates to chunk-level sampling when all the tuples inside a chunk are included in the sample. This is likely to happen when there is high variability inside a chunk. When tuples inside a chunk are similar, however, bi-level sampling allows for the chunk to be represented by a much smaller number of tuples. This has the potential to dramatically reduce the cost of \texttt{EXTRACT} in raw data processing. The downside of bi-level sampling is the larger number of parameters. In addition to the number of chunks, the number of sampled tuples inside each chunk has to be specified. However, these are determined dynamically in online aggregation, thus, this is not a real problem.

\subsection{Parallel Procedure}\label{sec:bilevel-sampling:par-procedure}

As in the case of chunk-level sampling, when bi-level sampling is applied to parallel online aggregation over raw data, the inspection paradox can invalidate the entire sampling procedure. The impact of the inspection paradox is further aggravated by the different number of tuples extracted across chunks. For example, consider a highly-variable chunk followed by a uniform one (chunk 2 and 3 in Figure~\ref{fig:sampling-generation}). Although chunk 2 is scheduled before chunk 3, fewer tuples from chunk 3 have to be extracted and included in estimation. As a result, chunk 3 -- even chunk 1 in Figure~\ref{fig:sampling-generation} -- is extracted before chunk 2. In the case of chunk-level sampling, an estimate can be generated only after chunk 2 finishes---the estimate includes chunk 3 and 1, as well. While exactly the same conditions apply to bi-level sampling, the secondary sampling process inside chunks provides additional opportunities not to delay estimation---an essential requirement in online aggregation.

A major contribution of this paper is a novel solution for continuous estimation. We devise a mechanism that enforces the existence of samples from all the chunks in \texttt{EXTRACT} at any estimation time interval $\delta$. Each \texttt{EXTRACT} thread is configured with a timing parameter $t^{\textit{eval}}$ that specifies when samples from the chunk have to be produced. This can happen multiple times during the execution of \texttt{EXTRACT}. Since chunks are scheduled for extraction sequentially, this guarantees that samples are extracted in order. The number of tuples included in the sample, however, can vary based on the properties of the chunk. This is illustrated in Figure~\ref{fig:sampling-generation} where 3 tuples are sampled from chunk 3, while only 2 tuples from chunk 2 and 1, respectively. As long as the timing parameter $t^{\textit{eval}}$ is smaller than $\delta$, improved estimates can be generated. $t^{\textit{eval}}$ is -- in a sense -- related to the variables $t^{\textit{sch}}$ and $t^{\textit{proc}}$ in~\cite{online-mapreduce}. However, instead of including it in the estimation snapshot, we use timing to enforce the bi-level sampling process and its corresponding estimation. The overhead incurred by the timing mechanism is minimal and reduces to inspecting a timer after groups of several tuples -- each tuple, in the extreme -- are extracted.

\subsection{Estimation}\label{sec:bilevel-sampling:estimation}

We focus on the estimator for the \texttt{SUM} aggregate. \texttt{COUNT} is identical to \texttt{SUM} with $\textit{expression}=1$. As shown in~\cite{bi-level-bernoulli}, only minor modifications have to be made for complex aggregates such as \texttt{AVERAGE}, \texttt{VARIANCE}, or standard deviation. The notation used in our analysis is shown in Table~\ref{tbl:notation}. It adapts the notation for Bernoulli sampling used in~\cite{bi-level-bernoulli} to sampling without replacement.

\begin{table}[htbp]
  \begin{center}
    \begin{tabular}{ll}

	\textbf{Symbol} & \textbf{Meaning}\\

	\hline
	\hline
	
	$T$ & Set of tuples in table \\
	$U$ & Set of chunks in table \\
	$C_{j}$ & Set of tuples on chunk $j$ \\

	\hline

	$T'$ & Set of tuples in sample \\
	$U'$ & Set of chunks in sample \\
	$C'_{j}$ & Set of tuples on chunk $j$ that are in sample \\

	\hline

	$M = |T|$ & Number of tuples in table \\
	$N = |U|$ & Number of chunks in table \\
	$M_{j} = |C_{j}|$ & Number of tuples on chunk $j$ \\

	\hline

	$m = |T'|$ & Number of tuples in sample \\
	$n = |U'|$ & Number of chunks in sample \\
	$m_{j} = |C'_{j}|$ & Number of tuples in sample of chunk $j$ \\

	\hline

	$x_{i}$ & Value of \textit{expression} for the $i^{\textit{th}}$ tuple in table ($x_{i} = 0$ if tuple $i$ fails to satisfy \textit{predicate}) \\
	$y_{j} = \sum_{i\in C_{j}} {x_{i}}$ & Sum of $x_{i}$ values on chunk $j$ \\
	$y'_{j} = \sum_{i\in C'_{j}} {x_{i}}$ & Sum of $x_{i}$ values in sample of chunk $j$ \\
	$y''_{j} = \sum_{i\in C'_{j}} {x_{i}^{2}}$ & Sum of $x_{i}^{2}$ values in sample of chunk $j$ \\

	\hline

    \end{tabular}
  \end{center}

\caption{Notation for bi-level sampling used throughout the paper.}\label{tbl:notation}
\end{table}

Let $\tau=\sum_{j\in U} {y_{j}} = \sum_{i\in T} {x_{i}}$ denote the true result of the query. In order to define an unbiased estimator for $\tau$, we first introduce an estimator for $y_{j}$, the sum of the expression values in chunk $j$. This is the standard estimator for tuple-level sampling without replacement restricted to a chunk, i.e., $\widehat{y}_{j} = \frac{M_{j}} {m_{j}} y'_{j}$. It is well-known that $\widehat{y}_{j}$ is an unbiased estimator. The unbiased estimator for $\tau$ combines the chunk estimators in a standard sampling without replacement estimator over the chunks:
\begin{equation}\label{eq:estimator}
\widehat{\tau} = \frac{N} {n} \sum_{j=1}^{n} {\widehat{y}_{j}} = \frac{N} {n} \sum_{j=1}^{n} {\frac{M_{j}} {m_{j}} \sum_{i\in C'_{j}} {x_{i}}}
\end{equation}
Confidence bounds are the backbone of online aggregation since they characterize the accuracy of the estimator. This requires the computation of the variance and the definition of a variance estimator based on the samples. By assuming normality based on the Central Limit Theorem~\cite{sampling-techniques}, confidence bounds are computed from the cumulative distribution function (cdf) of the normal distribution. We provide the formulae for the bi-level sampling variance and a corresponding unbiased estimator in the following theorems which are based on~\cite{sampling}.

\begin{thm}\label{thm:variance}
The variance of the bi-level sampling estimator $\widehat{\tau}$ defined in Eq.~\eqref{eq:estimator} is given by:
\begin{equation}\label{eq:variance}
\Var{\widehat{\tau}} = \frac{N} {N-1} \frac{N-n} {n} \sum_{j=1}^{N} {\left( y_{j} - \frac{\sum_{i\in T} {x_{i}}} {N} \right)^{2}} + \frac{N} {n} \sum_{j=1}^{N} \left[ {\frac{M_{j}} {M_{j}-1} \frac{M_{j}-m_{j}} {m_{j}} \sum_{i\in C_{j}} {\left( x_{i} - \frac{y_{j}} {M_{j}} \right)^{2} } } \right]
\end{equation}
\end{thm}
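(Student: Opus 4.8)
The plan is to compute $\Var{\widehat{\tau}}$ by the law of total variance, conditioning on which chunks are selected in the first-stage sample $U'$. Write $\widehat{\tau} = \frac{N}{n}\sum_{j\in U'}\widehat{y}_j$, where $\widehat{y}_j = \frac{M_j}{m_j}y'_j$ is the second-stage Horvitz--Thompson estimator restricted to chunk $j$. The decomposition is
\begin{equation*}
\Var{\widehat{\tau}} = \Var{\mathbb{E}\left[\widehat{\tau}\mid U'\right]} + \mathbb{E}\left[\Var{\widehat{\tau}\mid U'}\right].
\end{equation*}
For the first term, since each $\widehat{y}_j$ is conditionally unbiased for $y_j$ given that chunk $j\in U'$, we get $\mathbb{E}[\widehat{\tau}\mid U'] = \frac{N}{n}\sum_{j\in U'}y_j$, which is exactly the classical single-stage sampling-without-replacement estimator of $\tau = \sum_{j=1}^N y_j$ from a simple random sample of $n$ chunks out of $N$. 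Its variance is the standard SRSWOR formula $\frac{N}{N-1}\,\frac{N-n}{n}\sum_{j=1}^N\left(y_j - \frac{\tau}{N}\right)^2$, which I would quote from~\cite{sampling-techniques,sampling}; this produces the first summand of Eq.~\eqref{eq:variance}.

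For the second term, conditionally on $U'$ the second-stage samples are drawn independently across the selected chunks, so $\Var{\widehat{\tau}\mid U'} = \frac{N^2}{n^2}\sum_{j\in U'}\Var{\widehat{y}_j\mid j\in U'}$. Again by the SRSWOR variance formula applied within chunk $j$ (sampling $m_j$ tuples out of $M_j$), $\Var{\widehat{y}_j\mid j\in U'} = \frac{M_j}{M_j-1}\,\frac{M_j-m_j}{m_j}\sum_{i\in C_j}\left(x_i - \frac{y_j}{M_j}\right)^2 =: V_j$. Then I take the outer expectation over the first-stage selection: each chunk $j$ appears in $U'$ with inclusion probability $n/N$, so $\mathbb{E}\left[\sum_{j\in U'}V_j\right] = \frac{n}{N}\sum_{j=1}^N V_j$, giving $\frac{N^2}{n^2}\cdot\frac{n}{N}\sum_{j=1}^N V_j = \frac{N}{n}\sum_{j=1}^N V_j$, which is precisely the second summand of Eq.~\eqref{eq:variance}. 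Adding the two terms completes the proof.

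The main subtlety—though not really an obstacle—is keeping the conditioning clean: one must ensure the second-stage draws are genuinely independent across chunks (which holds because independent tuple orders are used across chunks, as stated in Section~\ref{sec:bilevel-sampling:seq-procedure}) and that the within-chunk estimator's conditional variance depends only on chunk $j$'s population, not on which other chunks were selected. A minor point is handling the bookkeeping when $m_j = M_j$ for some chunk (bi-level degenerates to chunk-level there): the term $\frac{M_j-m_j}{m_j}$ vanishes, consistent with no second-stage sampling error, so the formula remains valid. I would also note in passing that the appearance of $\sum_{i\in T}x_i / N = \tau/N$ as the chunk mean in the first term, rather than a sample quantity, is expected since this is the exact (population) variance, not its estimator; the estimator is the subject of the companion theorem.
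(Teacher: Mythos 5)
Your derivation is correct. The paper itself does not prove the formula: its entire proof is a one-line citation to Eq.~(11.22, pp.~303) of Cochran's \emph{Sampling Techniques}, whereas you reconstruct that result from scratch via the law of total variance, conditioning on the first-stage chunk sample $U'$. Both routes rest on the same classical two-stage SRSWOR machinery, so this is less a different method than a different level of detail: you supply the argument the cited textbook formula encapsulates. Your two pieces check out against Eq.~\eqref{eq:variance} --- the between-chunk term is the SRSWOR variance of $\frac{N}{n}\sum_{j\in U'}y_j$ written with the population sum of squares $\frac{1}{N-1}\sum_j(y_j-\tau/N)^2$, and the within-chunk term follows from conditional independence of the second-stage draws plus the inclusion probability $n/N$ --- and your remarks about the degenerate case $m_j=M_j$ and about the population (rather than sample) quantities appearing in the exact variance are both apt. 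The one caveat worth stating explicitly is that the computation treats the per-chunk sample sizes $m_j$ as fixed design parameters, determined for every chunk $j\in\{1,\dots,N\}$ in advance; in the adaptive OLA-RAW procedure of Section~\ref{sec:ola-raw-sampling} the $m_j$ are data-dependent, and the theorem (like Cochran's formula) applies to the non-adaptive design. What your version buys is a self-contained, checkable proof; what the paper's citation buys is brevity and an authoritative anchor for readers who want the general reference.
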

\begin{proof}
This is obtained by a rewriting of the formula Eq.~(11.22, pp. 303) given in~\cite{sampling-techniques}.
\end{proof}

There are two distinct terms in the variance formula---the first for the variance between chunks and the second for the variance inside each chunk. The variance between chunks measures how a chunk deviates from the average across all the chunks. The variance inside a chunk computes the deviation of a tuple from the average value of the chunk. The sampling without replacement nature of the process is reflected in the scaling factors of the two terms. The more chunks are sampled, the smaller the scaling factor, i.e., $\frac{N-n} {n}$. The same applies to the number of tuples sampled inside a chunk. In the extreme case when all the chunks are sampled, the variance across chunks vanishes. This corresponds to stratified sampling. If all the tuples inside a sampled chunk are included, the variance inside that chunk reduces to zero. By ignoring the terms corresponding to non-sampled chunks, the resulting variance corresponds to the chunk-level sampling variance---as well as the estimator $\tau$.

\begin{thm}\label{thm:variance-est}
An unbiased estimator for the variance of bi-level sampling is given by:
\begin{equation}\label{eq:variance-est}
\widehat{\Var{\widehat{\tau}}} = \frac{N} {n} \frac{N-n} {n-1} \sum_{j=1}^{n} {\left( \frac{M_{j}} {m_{j}} y'_{j} - \frac{\sum_{j'=1}^{n} {\frac{M_{j}} {m_{j}} y'_{j'}}} {n} \right)^{2}} + \frac{N} {n} \sum_{j=1}^{n} \left[ {\frac{M_{j}} {m_{j}} \frac{M_{j}-m_{j}} {m_{j}-1} \sum_{i\in C'_{j}} {\left( x_{i} - \frac{y'_{j}} {m_{j}} \right)^{2} } } \right]
\end{equation}
\end{thm}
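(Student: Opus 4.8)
The plan is to prove unbiasedness by the classical two-stage argument: condition on the first-stage chunk sample, apply the standard facts about simple random sampling without replacement (SRSWOR) at the inner stage, and then average over the outer stage---the same decomposition that underlies Theorem~\ref{thm:variance}. Write $E = E_1 E_2$, where $E_1$ averages over the random set $U'$ of $n$ sampled chunks and $E_2$ averages over the mutually independent within-chunk samples $C'_j$ ($j\in U'$) given $U'$. Put $\widehat y_j = \frac{M_j}{m_j} y'_j$ and $V_j := \frac{M_j}{M_j-1}\frac{M_j-m_j}{m_j}\sum_{i\in C_j}\big(x_i - \frac{y_j}{M_j}\big)^2$, so that the second term of Eq.~\eqref{eq:variance} equals $\frac{N}{n}\sum_{j=1}^N V_j$. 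I will use three textbook SRSWOR facts: (i) given $U'$, $\widehat y_j$ is unbiased for $y_j$ with conditional variance $V_j$, and the $\widehat y_j$ are independent across $j\in U'$; (ii) the within-chunk sample variance $\frac{1}{m_j-1}\sum_{i\in C'_j}\big(x_i - \frac{y'_j}{m_j}\big)^2$ is $E_2$-unbiased for $\frac{1}{M_j-1}\sum_{i\in C_j}\big(x_i - \frac{y_j}{M_j}\big)^2$, so that $E_2$ of the $j$-th summand of the second term of Eq.~\eqref{eq:variance-est} equals $\frac{N}{n}V_j$; (iii) a sample variance over an SRSWOR sample of chunks is $E_1$-unbiased for the corresponding finite-population variance, and $\Pr[j\in U'] = n/N$.

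Given these, the second (within-chunk) term of Eq.~\eqref{eq:variance-est} has expectation $\sum_{j=1}^N V_j$: apply $E_2$ by fact (ii), then $E_1$ by fact (iii), the factor $n/N$ cancelling the leading $N/n$. This is the easy half.

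The crux---and the only place real care is needed---is the first (between-chunk) term, because second-stage sampling noise inflates the empirical spread of the $\widehat y_j$; note its $j$-th inner quantity is $\widehat y_j - \overline{\widehat y}$ with $\overline{\widehat y} = \frac1n\sum_{j'=1}^n \widehat y_{j'}$. Conditioning on $U'$, the $\widehat y_j$ are independent with means $y_j$ and variances $V_j$, so the elementary identity $E\big[\sum_j (Z_j - \bar Z)^2\big] = \sum_j(\mu_j - \bar\mu)^2 + \frac{n-1}{n}\sum_j V_j$ for independent $Z_j$ gives $E_2\big[\sum_{j\in U'}(\widehat y_j - \overline{\widehat y})^2 \mid U'\big] = \sum_{j\in U'}\big(y_j - \frac1n\sum_{j'\in U'} y_{j'}\big)^2 + \frac{n-1}{n}\sum_{j\in U'} V_j$. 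Applying $E_1$---the first sum averages to $(n-1)\,S_y^2$ with $S_y^2 := \frac{1}{N-1}\sum_{j=1}^N\big(y_j - \frac1N\sum_{i\in T}x_i\big)^2$ by fact (iii), and the second to $\frac{n-1}{N}\sum_{j=1}^N V_j$---and then multiplying by $\frac{N}{n}\frac{N-n}{n-1}$ shows the first term of Eq.~\eqref{eq:variance-est} has expectation $\frac{N(N-n)}{n}S_y^2 + \frac{N-n}{n}\sum_{j=1}^N V_j$.

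Adding the two contributions gives $\frac{N(N-n)}{n}S_y^2 + \big(\frac{N-n}{n} + 1\big)\sum_{j=1}^N V_j = \frac{N(N-n)}{n}S_y^2 + \frac{N}{n}\sum_{j=1}^N V_j$, which matches $\Var{\widehat\tau}$ from Eq.~\eqref{eq:variance} term by term, since $\frac{N}{N-1}\frac{N-n}{n}\sum_{j=1}^N\big(y_j - \frac1N\sum_{i\in T}x_i\big)^2 = \frac{N(N-n)}{n}S_y^2$ and the second term of Eq.~\eqref{eq:variance} is $\frac{N}{n}\sum_{j=1}^N V_j$ by the definition of $V_j$. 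Hence $E\big[\widehat{\Var{\widehat\tau}}\big] = \Var{\widehat\tau}$. I expect the main obstacle to be exactly this cancellation: the between-chunk estimator does not estimate $S_y^2$ alone but $S_y^2 + \frac1N\sum_j V_j$, and one must check that the absence of a finite-population correction on the within-chunk term of Eq.~\eqref{eq:variance} (coefficient $\frac{N}{n}$, not $\frac{N-n}{n}$) is precisely what lets the leftover $\frac{N-n}{n}\sum_j V_j$ from the first term combine with the $\sum_j V_j$ from the second term to reproduce it. As a much shorter alternative, one can recognize Eq.~\eqref{eq:variance-est} as the SRSWOR transcription of Cochran's standard two-stage variance estimator---the companion of the formula cited in the proof of Theorem~\ref{thm:variance}---and invoke~\cite{sampling-techniques,sampling}; I would still prefer to include the self-contained derivation above.
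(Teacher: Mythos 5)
Your proposal is correct, but it takes a different (and much more explicit) route than the paper: the paper's entire proof is a one-line citation, observing that Eq.~\eqref{eq:variance-est} is a rewriting of the standard two-stage variance estimator, Eq.~(11.24, pp.~303) of Cochran~\cite{sampling-techniques}, just as Theorem~\ref{thm:variance} is a rewriting of Eq.~(11.22). What you supply is essentially the derivation that sits behind that citation, carried out from scratch: the $E = E_1 E_2$ conditioning, the three SRSWOR facts, and---crucially---the observation that the between-chunk sample variance of the $\widehat{y}_j$ is \emph{not} unbiased for $S_y^2$ alone but picks up an extra $\tfrac{1}{N}\sum_j V_j$ from second-stage noise, which is exactly absorbed by the absence of a finite-population correction on the within-chunk term of Eq.~\eqref{eq:variance} (coefficient $\tfrac{N}{n}$ rather than $\tfrac{N-n}{n}$). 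I verified the algebra: the identity $E\bigl[\sum_j (Z_j-\bar Z)^2\bigr]=\sum_j(\mu_j-\bar\mu)^2+\tfrac{n-1}{n}\sum_j\sigma_j^2$ holds for uncorrelated $Z_j$, the first term of Eq.~\eqref{eq:variance-est} indeed has expectation $\tfrac{N(N-n)}{n}S_y^2+\tfrac{N-n}{n}\sum_j V_j$, the second has expectation $\sum_j V_j$, and their sum matches Eq.~\eqref{eq:variance}. The only implicit hypothesis you rely on---conditional independence of the within-chunk samples given $U'$---is guaranteed by the paper's sampling procedure (independent shuffles across chunks). The trade-off is clear: the paper's citation is economical but opaque (and arguably fragile, since the reader must check that the adaptation from Bernoulli-style notation to SRSWOR preserves unbiasedness), whereas your self-contained argument makes the one genuinely delicate cancellation visible. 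Either is acceptable; yours is the more informative proof.
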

\begin{proof}
This is obtained by a rewriting of the formula Eq.~(11.24, pp. 303) given in~\cite{sampling-techniques}.
\end{proof}

In order to evaluate the query estimator and estimate its variance, a series of chunk-level statistics have to be computed. They include $m_{j}$, $y'_{j}$, and $y''_{j}$. These quantities require minimum space and processing overhead beyond what is required by the actual query. Additionally, the total number of tuples and the number of tuples in each chunk have to be known. This is problematic for raw data in general. However, since most textual formats store a tuple per line, e.g., CSV and JSON, these values can be easily obtained by a simple execution of the command \texttt{wc -l}. Other formats, e.g., HDF5 and FITS, store these values in the file metadata.

\section{OLA-RAW BI-LEVEL SAMPLING}\label{sec:ola-raw-sampling}

Given a bi-level sample extracted from raw data, Theorem~\ref{thm:variance-est} allows us to derive confidence bounds for an aggregate query. However, the theorem does not specify what is the optimal sampling procedure. In this section, we investigate how to extract a bi-level sample that is optimal for online aggregation over raw data. We define this problem formally and first introduce a solution that guarantees convergence to the required accuracy in one pass over the data. Then, we design a novel resource-aware parallel sampling procedure that allocates the system resources optimally in order to achieve the desired accuracy.

\subsection{Holistic Sampling}\label{sec:ola-raw-sampling:holistic}

We start with a straightforward realization of the bi-level sampling procedure that does not consider how much to sample from a chunk---it samples the entire chunk. However, in order to avoid the inspection paradox and to reduce the interval between estimations, samples are extracted from each chunk at $t^{\textit{eval}}$ time intervals. We emphasize that this is not chunk-level sampling because any subset of the chunk represents a sample---including the complete chunk. By grouping the samples from all the chunks together, a bi-level sample is generated and an estimate and corresponding confidence bounds can be computed. If the required accuracy $\epsilon$ is satisfied, the query can be stopped. Otherwise, the exact answer is obtained if the query is executed to completion.

\subsection{Optimization Formulation}\label{sec:ola-raw-sampling:formulation}

The optimization problem for online aggregation over raw data is to minimize the query processing cost under the constraint of achieving the specified accuracy $\epsilon$. In order to define formally this problem, we have to introduce a cost model for bi-level sampling. We start from the cost model for parallel in-situ processing over raw data introduced in~\cite{vert-part:ssdbm-2015}. In this model, the query processing cost is defined as the maximum between the read I/O time $T_{\textit{I/O}}$ and the CPU extraction time with $P$ worker threads $T_{\textit{CPU}}$---\texttt{READ} and \texttt{EXTRACT} are overlapped. In the case of bi-level sampling, $T_{\textit{I/O}}$ is a linear function of the number of sampled chunks $n$, i.e., $T_{\textit{I/O}} \approx n$, while $T_{\textit{CPU}}$ is a linear function of the number of sampled tuples inside the chunk, i.e., $T_{\textit{CPU}} \approx \frac{1}{P} \cdot \sum_{j=1}^{n} {m_{j}}$. The constraint on the accuracy $\epsilon$ can be written as an inequality between the variance estimator $\widehat{\Var{\widehat{\tau}}}$ and a maximum variance $\textit{Var}_{\textit{max}}$ derived from $\epsilon$.

Based on these, the bi-level sampling for online aggregation over raw data optimization problem can be expressed as follows:
\begin{equation}\label{eq:optim-formulation}
\begin{split}
& \text{minimize} \hspace*{0.2cm} \max\left\{ T_{\textit{I/O}}(n), T_{\textit{CPU}}(\overline{m_{j}}) \right\} \text{subject\ to}\\
& \text{constraint} \hspace*{0.2cm} \widehat{\Var{\widehat{\tau}}}(n,\overline{m_{j}}) \leq \textit{Var}_{\textit{max}}
\end{split}
\end{equation}
where the variables to be optimized are $n$ and $\overline{m_{j}}$, $j\in \{1,\dots,n\}$. Closed-form formulae for sequential bi-level Bernoulli sampling are derived in~\cite{bi-level-bernoulli}. They are based on the notion of chunk heterogeneity index which measures the variability of values within a chunk relative to the variability of values between chunks. These formulae cannot be extended to our $\max$ objective function. Moreover, computing the chunk heterogeneity index over raw data in exploration tasks defeats the purpose of in-situ processing. The algorithms proposed in~\cite{bi-level-bernoulli} require the existence of a pilot sample or of complex chunk-level statistics such as the number of distinct values and the variance. These are not available for raw data.

\subsection{Single-Pass Sampling}\label{sec:ola-raw-sampling:single-pass}

Since the variables in the optimization formulation cannot be computed offline, the strategy used in online aggregation is to execute the bi-level sampling process and check the constraint at runtime. However, this does not simplify the problem since we have to decide what is better: sample more chunks, i.e., increase $n$? or sample more tuples from the current chunk, i.e., increase $m_{j}$? While the variance decreases in both cases, we cannot quantify the overall impact on the objective function.

The solution we propose is to set the number of sampled chunks $n$ and solve the optimization formulation only with $m_{j}$ variables. The objective function reduces to minimizing the extraction time $T_{\textit{CPU}}(\overline{m_{j}})$, i.e., the number of extracted tuples. We set $n=N$ because this eliminates the term corresponding to the variance between chunks in Eq.~\eqref{eq:variance-est} while preserving the bi-level nature of the sampling process. Even with this simplification, we cannot derive closed-form formulae for the $m_{j}$ variables without having knowledge of the chunk heterogeneity index. In this situation, our strategy is to decompose the optimization formulation into separate problems for each chunk. These problems have the potential to become independent because of the simplification applied to the variance. For this to be the case, though, we have to identify $\textit{Var}_{\textit{max}}^{j}$ values for each chunk that guarantee the global constraint is satisfied if the constraints at chunk-level are satisfied. Since the same sampling without replacement process is performed inside a chunk, the local constraints can be written as:
\begin{equation}\label{eq:local-chunk-constraint}
\frac{M_{j}} {m_{j}} \frac{M_{j}-m_{j}} {m_{j}-1} \sum_{i\in C'_{j}} {\left( x_{i} - \frac{y'_{j}} {m_{j}} \right)^{2} } \leq \textit{Var}_{\textit{max}}^{j}
\end{equation}
We obtain the relationship $\sum_{j=1}^{N} {\textit{Var}_{\textit{max}}^{j}} \leq \textit{Var}_{\textit{max}}$ between $\textit{Var}_{\textit{max}}^{j}$ and $\textit{Var}_{\textit{max}}$ by summing up the local constraints and enforcing the global constraint. While it is possible to choose the values of $\textit{Var}_{\textit{max}}^{j}$ based on chunk properties, a simpler solution is to equally divide $\textit{Var}_{\textit{max}}$ across the $N$ chunks, i.e., $\textit{Var}_{\textit{max}}^{j} = \frac{1}{N} \textit{Var}_{\textit{max}}$, $\forall j$. This constant solution guarantees that more tuples are sampled from chunks with higher variability and less from chunks that are homogeneous. Since the input to our problem is $\epsilon$ rather than $\textit{Var}_{\textit{max}}$, we determine the corresponding values $\epsilon^{j}$ that satisfy the modified optimization formulation. They are given in Theorem~\ref{thm:optimal-epsilon}.

\begin{thm}\label{thm:optimal-epsilon}
Given an aggregate query and an accuracy $\epsilon$, a bi-level sampling procedure in which the estimate of each chunk aggregate satisfies accuracy $\epsilon^{j}=\epsilon$, $\forall j$, produces a global estimate that satisfies accuracy $\epsilon$ when all the chunks are sampled.
\end{thm}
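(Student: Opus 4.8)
The plan is to reduce the global accuracy constraint to the per-chunk ones by exploiting the simplification $n=N$ together with the nonnegativity of the aggregated expression. First I would pin down what ``accuracy $\epsilon$'' means: at a fixed confidence level $1-\alpha$, the half-width $z_{\alpha}\sqrt{\widehat{\Var{\widehat{\tau}}}}$ of the normal confidence interval must be at most $\epsilon\,\widehat{\tau}$, equivalently $\widehat{\Var{\widehat{\tau}}}\le(\epsilon/z_{\alpha})^{2}\widehat{\tau}^{2}=\textit{Var}_{\textit{max}}$; the per-chunk accuracy $\epsilon^{j}$ is the analogous statement for the chunk estimator $\widehat{y}_{j}=\frac{M_{j}}{m_{j}}y'_{j}$, namely $\widehat{\Var{\widehat{y}_{j}}}\le(\epsilon^{j}/z_{\alpha})^{2}\widehat{y}_{j}^{2}$, where $\widehat{\Var{\widehat{y}_{j}}}=\frac{M_{j}}{m_{j}}\frac{M_{j}-m_{j}}{m_{j}-1}\sum_{i\in C'_{j}}(x_{i}-y'_{j}/m_{j})^{2}$ is exactly the $j$-th summand of the within-chunk term of Eq.~\eqref{eq:variance-est} under SRS without replacement restricted to the chunk.

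Second, I would substitute $n=N$ into Eq.~\eqref{eq:estimator} and Eq.~\eqref{eq:variance-est}. The factor $\frac{N-n}{n-1}$ multiplying the between-chunk term becomes zero, so $\widehat{\Var{\widehat{\tau}}}=\sum_{j=1}^{N}\widehat{\Var{\widehat{y}_{j}}}$, and likewise the $\frac{N}{n}$ prefactor becomes $1$ so that $\widehat{\tau}=\sum_{j=1}^{N}\widehat{y}_{j}$. This is the step that makes the chunk subproblems genuinely separable, as anticipated in Section~\ref{sec:ola-raw-sampling:single-pass}.

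Third, setting $\epsilon^{j}=\epsilon$ for every $j$ and summing the per-chunk variance bounds gives $\widehat{\Var{\widehat{\tau}}}=\sum_{j}\widehat{\Var{\widehat{y}_{j}}}\le(\epsilon/z_{\alpha})^{2}\sum_{j}\widehat{y}_{j}^{2}$. It then remains to compare $\sum_{j}\widehat{y}_{j}^{2}$ with $\widehat{\tau}^{2}=(\sum_{j}\widehat{y}_{j})^{2}$. Expanding, $(\sum_{j}\widehat{y}_{j})^{2}=\sum_{j}\widehat{y}_{j}^{2}+\sum_{j\neq k}\widehat{y}_{j}\widehat{y}_{k}$, so the inequality $\sum_{j}\widehat{y}_{j}^{2}\le\widehat{\tau}^{2}$ holds as soon as all cross terms $\widehat{y}_{j}\widehat{y}_{k}$ are nonnegative, which is guaranteed whenever every $\widehat{y}_{j}\ge 0$. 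Chaining the two inequalities yields $\widehat{\Var{\widehat{\tau}}}\le(\epsilon/z_{\alpha})^{2}\widehat{\tau}^{2}=\textit{Var}_{\textit{max}}$, i.e.\ the global estimate meets accuracy $\epsilon$.

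The main obstacle is precisely this last sign condition: superadditivity of squares fails if the chunk estimates can have mixed signs, so the statement really requires the expression values $x_{i}$ to be nonnegative---true for \texttt{COUNT}, for \texttt{SUM} of nonnegative quantities, and for squared expressions such as $(T.a-T.b)^{2}$---and I would state this assumption explicitly. A secondary point to handle cleanly is consistency of the confidence level: the $z_{\alpha}$ used for the chunk bounds must be the same as the one used for the global bound, otherwise the constant does not cancel; since all bounds are computed at the user-specified confidence via the normal cdf (the CLT approximation of Section~\ref{sec:bilevel-sampling:estimation}), this is immediate. Everything else is the routine substitution $n=N$ and the one-line expansion of the square.
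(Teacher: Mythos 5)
Your proof is correct and follows the same overall route as the paper's: set $n=N$ so that the between-chunk term of Eq.~\eqref{eq:variance-est} vanishes and both the estimator and its variance estimate become additive over chunks (the stratified-sampling degeneration of Eq.~\eqref{eq:proof:1:2}), sum the per-chunk accuracy constraints with $\epsilon^{j}=\epsilon$, and close the gap with a sub/super-additivity inequality. The only real difference is where that last inequality lives: the paper works with the confidence-interval half-width $F(v)=z_{\alpha}\sqrt{v}$ and invokes the subadditivity $F\left(\sum_{j}\Var{\widehat{\tau^{j}}}\right)\leq\sum_{j}F\left(\Var{\widehat{\tau^{j}}}\right)$ of Eq.~\eqref{eq:proof:1:3}, while you square everything and use $\sum_{j}\widehat{y}_{j}^{2}\leq\bigl(\sum_{j}\widehat{y}_{j}\bigr)^{2}$; these are the same fact, namely $2\sqrt{ab}\geq 0$. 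What your version buys is that it makes the hidden hypothesis visible: both inequalities -- and already the step that turns the relative constraint $F(\Var{\widehat{\tau^{j}}})/\widehat{\tau^{j}}\leq\epsilon^{j}$ into $F(\Var{\widehat{\tau^{j}}})\leq\epsilon^{j}\,\widehat{\tau^{j}}$ -- require the chunk estimates $\widehat{y}_{j}$ to be nonnegative. The paper's justification of Eq.~\eqref{eq:proof:1:3} (``this inequality holds as long as $\epsilon^{j}=\epsilon$'') is a non sequitur, since the subadditivity of $F$ has nothing to do with the choice of $\epsilon^{j}$; the condition that actually matters is the one you isolate, nonnegativity of the aggregated expression, and stating it as an explicit assumption is a genuine improvement over the published argument.
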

\begin{proof}
The estimator $\widehat{\tau}$, its variance $\Var{\widehat{\tau}}$, and the accuracy $\epsilon$ are connected by the following equation:
\begin{equation}\label{eq:proof:1:1}
\frac{F(\Var{\widehat{\tau}})}{\widehat{\tau}} \leq \epsilon \Longleftrightarrow F(\Var{\widehat{\tau}}) \leq \epsilon \cdot \widehat{\tau}
\end{equation}
where $F$ is a function of the variance, e.g., the inverse of the cumulative normal distribution function. This is what we have to prove according to the independent sampling processes executed for every chunk. We know that the same equation holds for each chunk-level estimator, i.e., $F\left(\Var{\widehat{\tau^{j}}}\right) \leq \epsilon^{j} \cdot \widehat{\tau^{j}}$, but with its corresponding accuracy $\epsilon^{j}$. When all the chunks are included in the sample, bi-level sampling degenerates into stratified sampling. In this case, the following equalities hold:
\begin{equation}\label{eq:proof:1:2}
\sum_{j=1}^{N} {\widehat{\tau^{j}}} = \widehat{\tau} \hspace*{0.5cm} \sum_{j=1}^{N} {\Var{\widehat{\tau^{j}}}} = \Var{\widehat{\tau}}
\end{equation}
The inequality in Eq.~\eqref{eq:proof:1:1} becomes $F\left(\sum_{j=1}^{N} {\Var{\widehat{\tau^{j}}}}\right) \leq \epsilon \cdot \sum_{j=1}^{N} {\widehat{\tau^{j}}}$. If we sum-up the inequalities across chunks, we obtain $\sum_{j=1}^{N} {F\left(\Var{\widehat{\tau^{j}}}\right)} \leq \sum_{j=1}^{N} {\epsilon^{j} \cdot \widehat{\tau^{j}}}$. This is guaranteed to be true because of the sampling procedure executed at each chunk---sampling stops only when the inequality is satisfied. If we set $\epsilon^{j}=\epsilon$, $\forall j$, we obtain $\sum_{j=1}^{N} {F\left(\Var{\widehat{\tau^{j}}}\right)} \leq \epsilon \cdot \sum_{j=1}^{N} {\widehat{\tau^{j}}}$. Thus, for the final inequality to hold, we have to prove that:
\begin{equation}\label{eq:proof:1:3}
F\left(\sum_{j=1}^{N} {\Var{\widehat{\tau^{j}}}}\right) \leq \sum_{j=1}^{N} {F\left(\Var{\widehat{\tau^{j}}}\right)}
\end{equation}
Since we use the inverse of the normal cumulative distribution function as function $F$ to derive confidence intervals in bi-level sampling, this inequality holds as long as $\epsilon^{j}=\epsilon$, $\forall j$. This proves the theorem.
\end{proof}

Theorem~\ref{thm:optimal-epsilon} provides an algorithm to independently sample across chunks, thus allowing for maximum concurrency. It guarantees that -- in the worst scenario when all the chunks are sampled -- the aggregate estimate meets the required accuracy. It is important to emphasize that an estimate can be computed at any instant during the process and the accuracy can be met before all the chunks are considered. This can be done exclusively from the raw data---without knowledge of any other statistical information beyond simple chunk-level tuple counts. Moreover, this worst-case algorithm provides improvements over parallel external tables -- the standard mechanism for raw data processing -- in terms of the number of tuples extracted.

\subsection{Resource-Aware Sampling}\label{sec:ola-raw-sampling:resource}

The single-pass sampling strategy guarantees that the required accuracy is met after a pass over the data. While theoretically sound for a worst-case scenario, it does not take into account the runtime conditions. As soon as the local accuracy is satisfied, sampling for the chunk is terminated, even though computational resources may be available. As a result, the opportunity to further reduce the total variance of the estimator is missed. We address this shortcoming by introducing a novel bi-level sampling strategy that dynamically and adaptively determines how much to sample from a chunk by continuously monitoring the resource utilization of the system. While the original goal of single-pass sampling is preserved, we aim to maximize the chunk-level sample size without decreasing the rate at which chunks are processed.

\begin{minipage}{.35\textwidth}
In resource-aware sampling, the decision of when to stop sampling from a chunk is based on the availability of system resources -- I/O bandwidth and CPU threads -- in addition to the chunk accuracy. This requires continuous system monitoring at runtime. We monitor the buffer into which chunks are stored before \texttt{EXTRACT} and the pool of threads available to \texttt{EXTRACT}, each time an estimate is generated from a chunk, i.e., $t^{\textit{eval}}$. \end{minipage}\hfill
\begin{minipage}{.6\textwidth}
\begin{figure}[H]
\begin{center}
	\includegraphics[width=\textwidth]{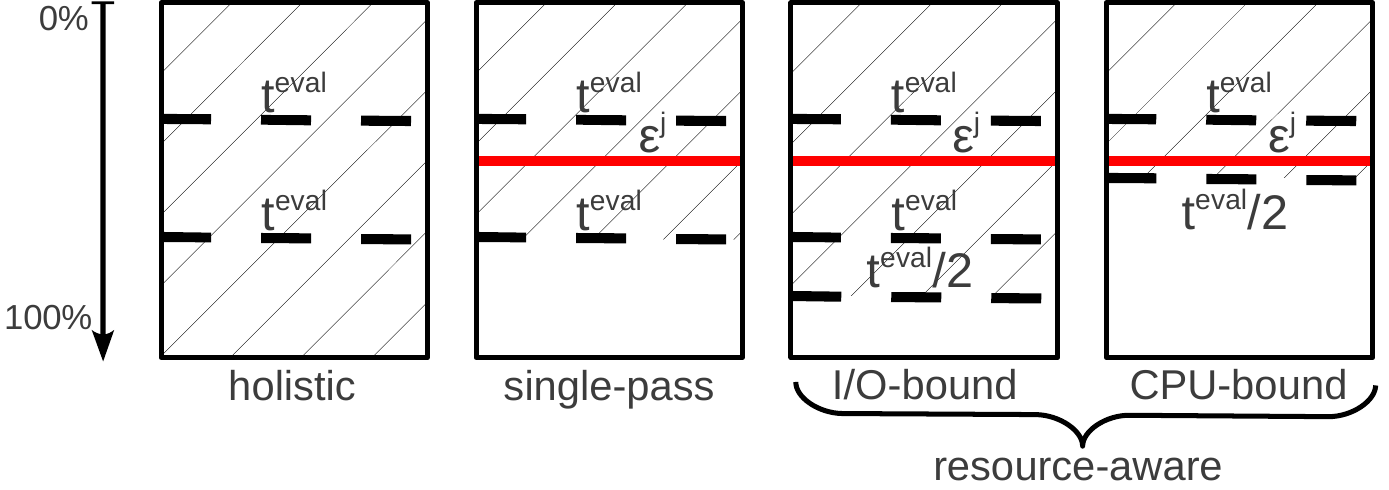}
	\caption{Bi-level sampling strategies in OLA-RAW.}
	\label{fig:res-threshold}
\end{center}
\end{figure}
\end{minipage}\hfill
The cost of monitoring does not incur significant overhead and it allows for the early detection of changes in the workload. As long as the number of threads is larger than the number of chunks in the buffer, processing is I/O-bound. Otherwise, it is CPU-bound. Since the objective function is the maximum of the two, we take different decisions in each case. They are depicted in Figure~\ref{fig:res-threshold}. In the case of I/O-bound in-situ processing, $t^{\textit{eval}}$ is halved only after the local chunk accuracy $\epsilon^{j}$ is achieved. For CPU-bound processing, this is done immediately after the first chunk estimate is generated. The rationale for these decisions is as follows. For I/O-bound processing, our goal is to end a chunk as soon as another chunk becomes available and there are no threads in the pool. In the case of CPU-bound tasks, the goal is to finalize a chunk as soon as the accuracy is met. By decreasing $t^{\textit{eval}}$, we increase the monitoring frequency in the hope that we detect the triggering factor -- accuracy or resource utilization -- as early as possible.

The timing parameter $t^{\textit{eval}}$ plays a central role in the resource-aware parallel sampling procedure proposed in this paper. Its value controls the frequency at which chunk-level estimates are produced and the monitoring interval. Moreover, $t^{\textit{eval}}$ is shared across the \texttt{EXTRACT} threads in order to avoid the inspection paradox. Thus, determining the appropriate values for $t^{\textit{eval}}$ is of major importance. When processing starts, we assign $t^{\textit{eval}}$ a lower bound value, e.g., $1 \textit{ms}$. $t^{\textit{eval}}$ cannot decrease below this value at any time. The first several chunks are extracted using the initial lower bound value because CPU resources are plentiful---it takes some time until the \texttt{EXTRACT} pipeline is filled. During this calibration process, our goal is to accurately determine the $t^{\textit{eval}}$ value at which the chunk accuracy is achieved. We set $t^{\textit{eval}}$ to the calibration average in order to reduce the number of monitoring operations. The calibration continues during the entire processing and $t^{\textit{eval}}$ is updated with the current running average. A clear upper bound for $t^{\textit{eval}}$ is $\delta$---the time interval at which results have to be presented to the user. Values larger than $\delta$ introduce hiccups in estimation. Another upper bound is represented by the time to process an entire chunk---this time is also monitored. In this case, the process degenerates to chunk-level sampling. Thus, the minimum of the two is taken as the upper bound. The monitoring in resource-constrained situations is done using an exponential decay scheme. The value of $t^{\textit{eval}}$ is repeatedly halved to increase the monitoring frequency. Nonetheless, the lower bound is enforced. This scheme is especially important when the accuracy $\epsilon$ is always satisfied at the first monitoring and the processing is CPU-bound since it forces the global $t^{\textit{eval}}$ to drop.

\subsection{Summary}\label{sec:ola-raw-sampling:summary}

In this section, we introduce three parallel sampling methods for online aggregation over raw data derived from bi-level sampling. As far as we know, this is the first time when bi-level sampling is used for estimation in online aggregation. The three methods are depicted in Figure~\ref{fig:res-threshold}. Holistic sampling is a direct realization of bi-level sampling that avoids the inspection paradox by generating chunk estimates at constant time intervals. Single-pass sampling improves upon the holistic scheme by reducing the size of the sample inside a chunk, while guaranteeing that the required accuracy is achieved in a single pass over the data. We provide an analytical scheme to determine the chunk-level accuracy $\epsilon^{j}$. Single-pass sampling is more efficient when processing is CPU-bound because it reduces the number of extracted tuples. In I/O-bound scenarios, holistic sampling is more efficient because it accesses fewer chunks. Resource-aware sampling is optimal independent of the processing since it dynamically and adaptively determines the appropriate chunk sample size and adapts to the characteristics of the workload. While it degenerates to holistic sampling in I/O-bound scenarios and to single-pass sampling for CPU-bound workloads, resource-aware sampling identifies the changes in processing and adapts accordingly by continuously monitoring and updating the estimation time interval $t^{\textit{eval}}$. It improves upon single-pass sampling by further reducing the chunk sample size or the estimator variance, while preserving the streaming access pattern.

\section{BI-LEVEL SAMPLE SYNOPSIS}\label{sec:synopsis}

Extracting a bi-level sample from raw data is expensive. Since data exploration requires a sequence of queries, this process becomes the bottleneck if executed from scratch for each query. It is important to notice that extracting the sample once and using it for multiple queries is not possible in our scenario because sampling is driven by the current query. It is very likely that the sample corresponding to a query cannot be used for another query, e.g., it does not include all the required columns. In this case, access to the raw data is necessary in order to extract the missing columns. This requires complete resampling and provides little opportunity for improvement. The more interesting case is when the same sample can be used across queries, however, the required accuracy for the new query cannot be met. This can happen because, for example, the user-specified accuracy $\epsilon$ increases, the new query is more selective, or the new aggregate is different. In this situation, the goal is to preserve and incrementally maintain the sample.

We propose a novel \textit{memory-resident bi-level sample synopsis} that is built and maintained based on the query workload. Since the size of the synopsis is bound by a specified memory budget $B$, the synopsis is not necessarily identical to the bi-level sample extracted from the raw data and used for estimation. The synopsis is built based on a query specified by the user, i.e., origin query, and is updated by each subsequent query---only if the query cannot be answered exclusively using the synopsis. A query that cannot take advantage of the synopsis triggers a complete rebuild automatically.

Several aspects underlie the novelty of the bi-level sample synopsis with respect to other sampling strategies used in online aggregation. First, the proposed synopsis is computed entirely at runtime and does not require any preprocessing, e.g., offline shuffling or sampling. Second, the synopsis is driven by the queries. There is no generic process to generate universal samples applicable to any query. Third, the synopsis caches in memory a subsample of the data to be used directly by subsequent queries. No other solution considers a query sequence. Each query starts sampling the data from scratch. While the idea of caching extracted data in memory is similar to NoDB~\cite{nodb}, the proposed bi-level sample synopsis does not require caching full columns. This is the only case in which NoDB can avoid accessing the raw data for subsequent queries. 

\begin{figure}[htbp]
	\centering
	\includegraphics[width=0.47\textwidth]{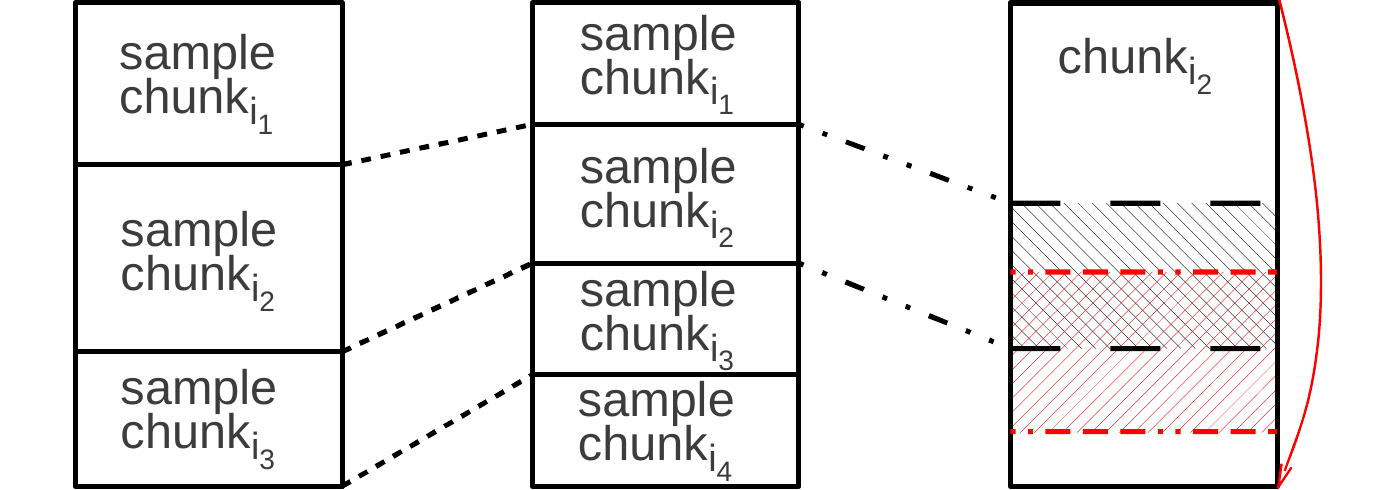}
	\caption{Maintenance of bi-level sample synopsis under new chunk insertion and existing chunk resampling.}
	\label{fig:sample-maintain}
\end{figure}

\subsection{Synopsis Construction}\label{ssec:synopsis:construction}

Our goal is to build a bi-level sampling synopsis with memory budget $B$ out of the samples extracted from raw data used in the estimation. Remember that the samples used in estimation are driven by the accuracy $\epsilon$. Thus, the total size can exceed memory budget $B$. The standard practice in online aggregation is to compute the quantities necessary for estimation -- $m_{j}$, $y'_{j}$, and $y''_{j}$ in our case -- and then discard the sample. Since sampling from raw data is expensive due to extraction, we preserve (some of) the samples in the bi-level synopsis.

We build the bi-level sample synopsis following a process similar to reservoir sampling~\cite{reservoir}. The chunks are considered for insertion in the random order they are extracted for estimation. As long as the memory budget $B$ is not exhausted, all the tuples extracted from a chunk are added to the synopsis---organized based on chunks. Complications appear when the memory budget is filled. In this case, we propose a \textit{variance-driven insertion strategy}---depicted in Figure~\ref{fig:sample-maintain}. The budget $B$ is divided across the chunks in the synopsis and the new one proportionally to their local variance for the current query. The larger the internal variance, the more synopsis space a chunk gets. As shown in Figure~\ref{fig:sample-maintain}, this requires dropping some of the sampled tuples inside the chunk while preserving a sample with smaller size. We realize this by discarding the tuples at the front of the random permutation based on which tuples are extracted. By following the variance-driven insertion strategy, the synopsis is guaranteed to contain a bi-level sample at any time instant. This is important because processing can finish at any moment. In the extreme case when all the chunks are included, the synopsis degenerates into a stratified sample. For this type of sampling, we can determine the optimal order in which to extract chunks in a subsequent query---the decreasing order of the chunk variance.

\subsection{Synopsis Maintenance}\label{ssec:synopsis:maintenance}

Whenever a chunk is contained in the synopsis, we use it for estimation. If more tuples are required, we extract them starting at the end of the samples stored in the synopsis using the random permutation corresponding to the chunk. When we reach the end of the permutation, we restart from the beginning in a circular random scan (Figure~\ref{fig:sample-maintain}). The new samples have to be merged with the existing ones already in the synopsis from the same chunk while satisfying the budget constraint. We have to determine the size of the merged sample and the content. The size is computed based on the same proportional variance allocation. We include in the synopsis tuples starting from the end of the new samples until we fill the allocated space. Tuples already in the synopsis are kept only if there is enough space. This strategy -- depicted on the right side of Figure~\ref{fig:sample-maintain} -- preserves the incremental sample generation while also refreshing the sample continuously across queries. 

\subsection{Estimation}\label{ssec:synopsis:estimation}

The benefit of the memory-resident synopsis is that it allows for estimates to be computed faster and -- in the best case scenario -- without even accessing the raw data. Since the synopsis is a bi-level sample, the formulae for the estimator and variance given in Section~\ref{sec:bilevel-sampling:estimation} can be applied directly. Moreover, the resource-aware sampling mechanism works without changes. The only difference is that chunks are accessed from the synopsis rather than from secondary storage.

We differentiate between two cases that provide separate optimization opportunities. If not all the chunks are represented in the synopsis, we have to enforce the original chunk order used to build the synopsis---both for the in-memory and non-in-memory chunks. Since I/O and estimation are overlapped, the reading of non-in-memory chunks is triggered concurrently with the evaluation of the in-memory chunks. An interesting situation arises when the local chunk accuracy is not satisfied for a chunk in the synopsis. In this case, the chunk may require additional sampling in order to meet the user specified accuracy $\epsilon$. Instead of scheduling the chunk for reading immediately, we append it at the end of the processing queue. We give priority to new chunks because they have infinite variance and, otherwise, we may introduce unnecessary delays in estimation. Notice, though, that the accuracy $\epsilon$ is still guaranteed to be achieved in a single pass over the raw data. In the second case, the synopsis contains samples from all the chunks. Once we set the access order, we start processing chunks from the synopsis. Whenever the local accuracy is not met, we schedule the chunk for reading. As an optimization, we can change the reading order based on the chunk variance---higher variance chunks are scheduled first. This is possible because the synopsis is a stratified sample.

\section{EXPERIMENTAL EVALUATION}\label{sec:experiments}

The objective of the experimental evaluation is to investigate the performance of OLA-RAW bi-level sampling across a variety of datasets -- synthetic and real -- and workloads---including a single query as well as a sequence of queries. Additionally, the difference among the proposed parallel bi-level sampling strategies -- holistic, single-pass, and resource-aware -- and with respect to chunk-level sampling is throughly analyzed. Specifically, the experiments we design are targeted to answer the following questions:

\begin{compactitem}
\item How does OLA-RAW bi-level sampling compare with external tables and chunk-level sampling as a function of the degree of parallelism?
\item How much data -- chunks and tuples -- are processed in order to answer a query with specified accuracy?
\item Is there any palpable difference -- in accuracy and resource utilization -- among the proposed bi-level sampling strategies? What about chunk-level sampling?
\item What is the effect of the bi-level sample synopsis on the execution of a sequence of queries?
\item Are the OLA-RAW confidence bounds correct? Is the inspection paradox an issue in practice?
\end{compactitem}

\subsection{Setup}\label{ssec:experiments:setup}

\textbf{Implementation.}
We implement OLA-RAW based on the super-scalar pipeline architecture of SCANRAW~\cite{scanraw,scanraw:tods}. Speculative loading is disabled and replaced with NoDB-style~\cite{nodb} caching. However, only the sample synopsis is cached in memory. In the experiments, we use CSV and FITS file formats. While SCANRAW extractors are already available, they are not immediately applicable to sampling which requires direct access to random chunks. Moreover, the tuples inside a chunk have to be accessed in random order and they have to be extracted incrementally. We introduce an estimation controller that manages the $t^{\textit{eval}}$ timing parameter and coordinates the \texttt{EXTRACT} instances. Whenever $t^{\textit{eval}}$ corresponding to an \texttt{EXTRACT} expires, the controller interrupts the instance and pulls its sample for estimation. Based on the stopping criteria and the resource utilization of the system, a decision is made on continuing the \texttt{EXTRACT} and its corresponding $t^{\textit{eval}}$ timer. The estimation controller produces an estimate and confidence bounds for the user at $\delta$ time intervals. The sample synopsis is also maintained by the controller which executes the maintenance procedure for each incoming chunk sample.

\textbf{System.}
All experiments are executed on a standard server with 2 AMD Opteron 6128 series 8-core processors (64 bit) -- 16 cores -- 64 GB of memory, and four 2 TB 7200 RPM SAS hard-drives configured RAID-0 in software. Each processor has 12 MB L3 cache while each core has 128 KB L1 and 512 KB L2 local caches. \eat{According to \texttt{hdparm}, }The cached and buffered read rates are 3 GB/second and 565 MB/second, respectively. If the experiment consists of a single query, we always enforce data to be read from disk by cleaning the file system buffers before execution. In experiments over a sequence of queries, the buffers are cleaned only before the first query. Thus, the second and subsequent queries can access cached data. Ubuntu 14.04.2 SMP $64$-bit with Linux kernel 3.13.0-43 is the operating system.

\textbf{Data.}
We run experiments over four datasets---one synthetic and three real. Three of the datasets are in CSV format, while \texttt{ptf-fits} is the original PTF dataset in the FITS binary format. Table~\ref{tbl:datasets} depicts the characteristics of the datasets. \texttt{ptf-csv} and \texttt{ptf-fits} contain the same data in text and binary format, thus the difference in size. The PTF dataset contains 1 billion transient detections, i.e., tuples. Each tuple has 8 attributes, i.e., columns, 6 of which are real numbers with 10 decimal digits. \texttt{wiki} is extracted from the Wikipedia Traffic Statistics V2 dataset\footnote{\url{https://aws.amazon.com/datasets/4182}}. It contains the aggregated hits per Wikipedia page and the number of bytes transferred for each hour in March 2015. The size of this dataset is 19 GB in CSV format. The \texttt{synthetic} dataset contains randomly generated integers smaller than 1 billion grouped in tuples of 16 columns. Each column is generated using a different zipfian distribution ranging from uniform to extremely skewed. For all the datasets, the number of chunks is chosen such that the size of a chunk is in the order of tens to a hundred megabytes. This optimizes the disk throughput.

\begin{table}[htbp]
  \begin{center}
    \begin{tabular}{l||r|r|r|r}

	\textbf{Dataset} & \textbf{\# Tuples} & \textbf{\# Chunks} & \textbf{\# Columns} & \textbf{Size} \\

	\hline
	
	\texttt{ptf-csv} & 1B & 1000 & 8 & 68 GB \\
	
	\texttt{ptf-fits} & 1B & 1000 & 8 & 60 GB \\

	\texttt{wiki} & 1.8B & 130 & 4 & 19 GB \\

	\texttt{synthetic} & 134M & 512 & 16 & 20 GB \\
    \end{tabular}
  \end{center}

\caption{Datasets used in the experiments.}\label{tbl:datasets}
\end{table}

\textbf{Methods.}
We perform experiments for three raw data processing methods. We use external tables (\texttt{EXT} in the figures) as a baseline for comparison. \texttt{EXT} computes the query result exactly by inspecting all the data in sequential order. There is no sampling or estimation, thus no overhead. The second method is parallel chunk-level sampling (\texttt{C} in the figures), while the third method is the proposed resource-aware OLA-RAW bi-level sampling (\texttt{BI} in the figures). This is the most advanced OLA-RAW sampling technique introduced in the paper. Each method is evaluated with a different number of worker threads allocated to \texttt{EXTRACT}. This number is specified after the method name, e.g., \texttt{BI-4} corresponds to OLA-RAW bi-level sampling with 4 threads.

\textbf{Measurements.}
All our figures depict the error ratio as a function of time. The error ratio is defined as the relative confidence bounds width ($\frac{\textit{high} - \textit{low}}{\textit{estimate}}$) measured as the ratio between the difference of the confidence interval extremes and the estimate. In all the results, the error is computed at $95\%$ accuracy, i.e., $\epsilon=0.95$. The estimation time interval $\delta$ is set to 1 second. This generates estimates for the user at an almost continuous rate. The number of chunks and tuples extracted in chunk-level and bi-level sampling is recorded in order to identify the difference between the two methods. These quantities are presented as the relative ratio from the entire dataset---they are always $1$ for \texttt{EXT}.

\subsection{Results}\label{ssec:experiments:results}

In this section, we present results for all the datasets and provide details on the queries and the experimental settings.

\begin{figure*}[htbp]
	\begin{center}
		\includegraphics[width=0.9\textwidth,height=4.19in]{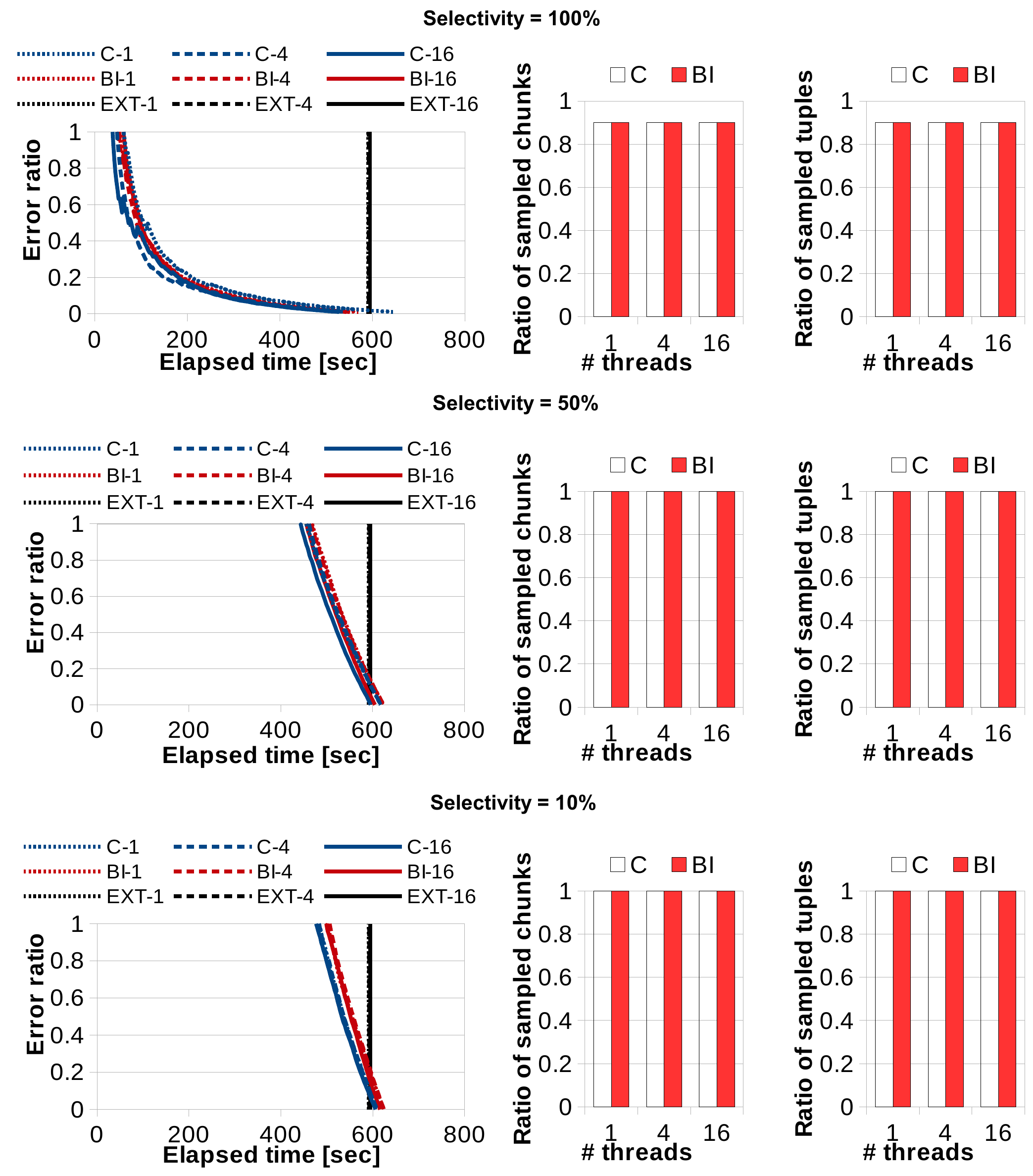}
		\caption{Results on the \texttt{ptf-fits} dataset.}
		\label{fig:ptf-fits}
	\end{center}
\end{figure*}

\subsubsection{Estimation Error and Convergence}\label{sssec:experiments:results:error}

\textbf{PTF-FITS.}
The \texttt{ptf-fits} dataset contains the PTF project data in the original storage format. The content is identical to \texttt{ptf-csv} and consists of 1 billion tuples. Each tuple has 8 attributes, 6 of which are real numbers with as many as 10 decimal digits. The increased precision is necessary for accurately representing the celestial coordinates and the time of the detected objects. The detections are sorted according to their detection time. This creates clumps of tuples clustered both on time and position---each night there are only a handful of clumps. Moreover, these clumps are not uniformly distributed in the sky. They are skewed around the location of the telescope. The aggregate query sums up a linear expression of the six real number attributes in the dataset---it is the same as for \texttt{ptf-fits}. The selectivity is controlled by range predicates on the domain of the positional attributes. Due to skewness, the number of tuples in the result does not immediately correspond to the selectivity value. The accuracy is set to $95\%$.

\begin{minipage}{.5\textwidth}
\begin{sql}
SELECT SUM($\sum_{i}{c_{i}\cdot A_{i}}$)
FROM ptf-fits
WHERE $A_{1}$ BETWEEN $[l_{1}, r_{1}]$ AND $A_{2}$ BETWEEN $[l_{2}, r_{2}]$
\end{sql}
\end{minipage}\hfill
\begin{minipage}{.4\textwidth}
The results are depicted in Figure~\ref{fig:ptf-fits}. Since FITS is a binary format, processing is I/O-bound independent of the number of threads.
\end{minipage}\hfill
This is the reason the curves for all the methods collapse into a single curve---there are minor differences due to the experimental conditions. In I/O-bound scenarios, \texttt{BI} degenerates to \texttt{C} since there are enough CPU resources to process all the tuples in a chunk. However, since estimates are computed more frequently, \texttt{BI} incurs additional overhead, thus, the small delay compared to \texttt{C}. The delay with respect to \texttt{EXT} for low selectivity is due to the random order in which chunks have to be processed for sampling. The main takeaway from these figures is that meaningful estimations can be produced in a timely manner only for high selectivities. At lower selectivity, accurate estimations are produced only close to the time when \texttt{EXT} finishes execution. We emphasize the cause for these results is the skewness and periodicity of the data. Even for $100\%$ selectivity, $90\%$ of the chunks are processed to reach the required accuracy.

\begin{figure*}[htbp]
	\begin{center}
		\includegraphics[width=0.9\textwidth,height=4.19in]{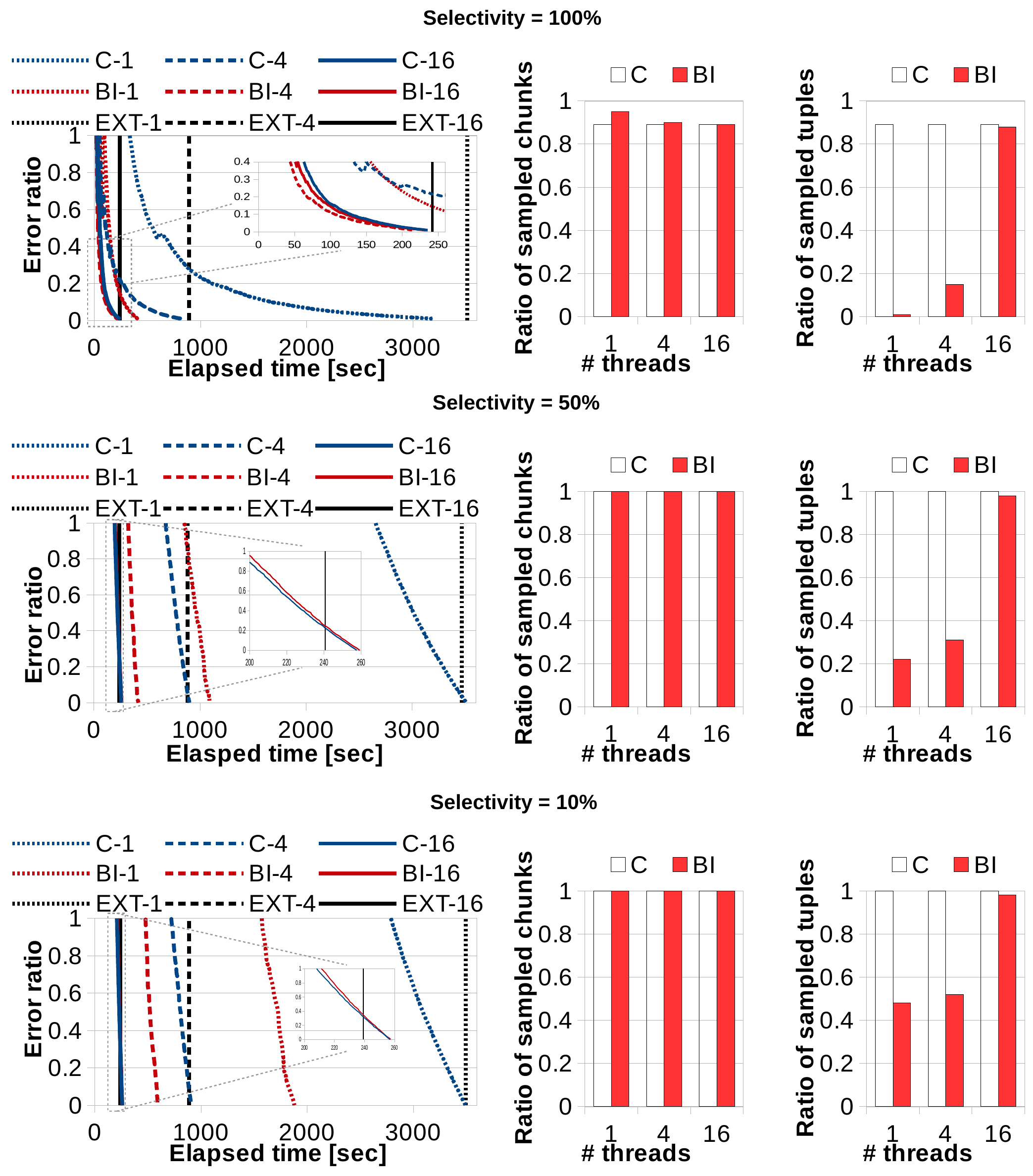}
		\caption{Results on the \texttt{ptf-csv} dataset.}
		\label{fig:ptf-txt}
	\end{center}
\end{figure*}

\textbf{PTF-CSV.}
Figure~\ref{fig:ptf-txt} depicts a complete comparison between the three methods considered in the experiments for the \texttt{ptf-csv} dataset. The query used in the evaluation sums up a linear expression of the six real number attributes in the dataset. The number of tuples included in the aggregate is controlled by a selection predicate on two of the six attributes. Selectivity $x\%$ corresponds to a range predicate that covers $x\%$ of the predicate attribute domain. Since \texttt{EXT} computes the exact result, its error is infinite until the computation finishes, at which point it becomes zero. As the number of threads increases, the execution time for \texttt{EXT} decreases linearly. This shows that processing the \texttt{ptf-csv} dataset is CPU-bound---\texttt{EXTRACT} is the bottleneck. For 1 and 4 threads, the \texttt{BI} error is always lower than the \texttt{C} error at a given time instant. The reason for this is the number of tuples processed by the two methods. Although the two methods process almost the same number of chunks, it is the number of processed tuples that determines the execution time in this CPU-bound scenario. While \texttt{C} has to process all the tuples inside each chunk, \texttt{BI} stops as soon as the required accuracy is satisfied. In this case, this happens after a small number of tuples which shows that chunks contain homogeneous tuples while they are highly different among themselves. This makes sense since candidates are added to the PTF catalog based on their detection time. For 16 threads, two interesting phenomena occur. First, \texttt{EXT} almost catches up with the sampling methods. This is a clear sign that processing becomes I/O-bound. Second, \texttt{BI} reduces to \texttt{C} since there are enough CPUs to process all the tuples inside a chunk. As expected, low selectivity has a negative impact on estimation. Since fewer tuples are part of the estimator, it takes longer to satisfy the required accuracy. In the worst case, all the chunks and all the tuples have to be processed to achieve the required accuracy and sampling does not bring any benefit over external tables. This happens for \texttt{C} for all the settings when selectivity is $50\%$ and $10\%$. Although \texttt{BI} also inspects all the chunks, it does so much faster since it does not extract all the tuples---the number of inspected tuples increases with the selectivity decrease, thus the increase in execution time. For 16 threads and low selectivity, external tables are the best alternative since sampling incurs overhead in data access.

\begin{figure*}[htbp]
	\begin{center}
		\includegraphics[width=0.9\textwidth,height=4.19in]{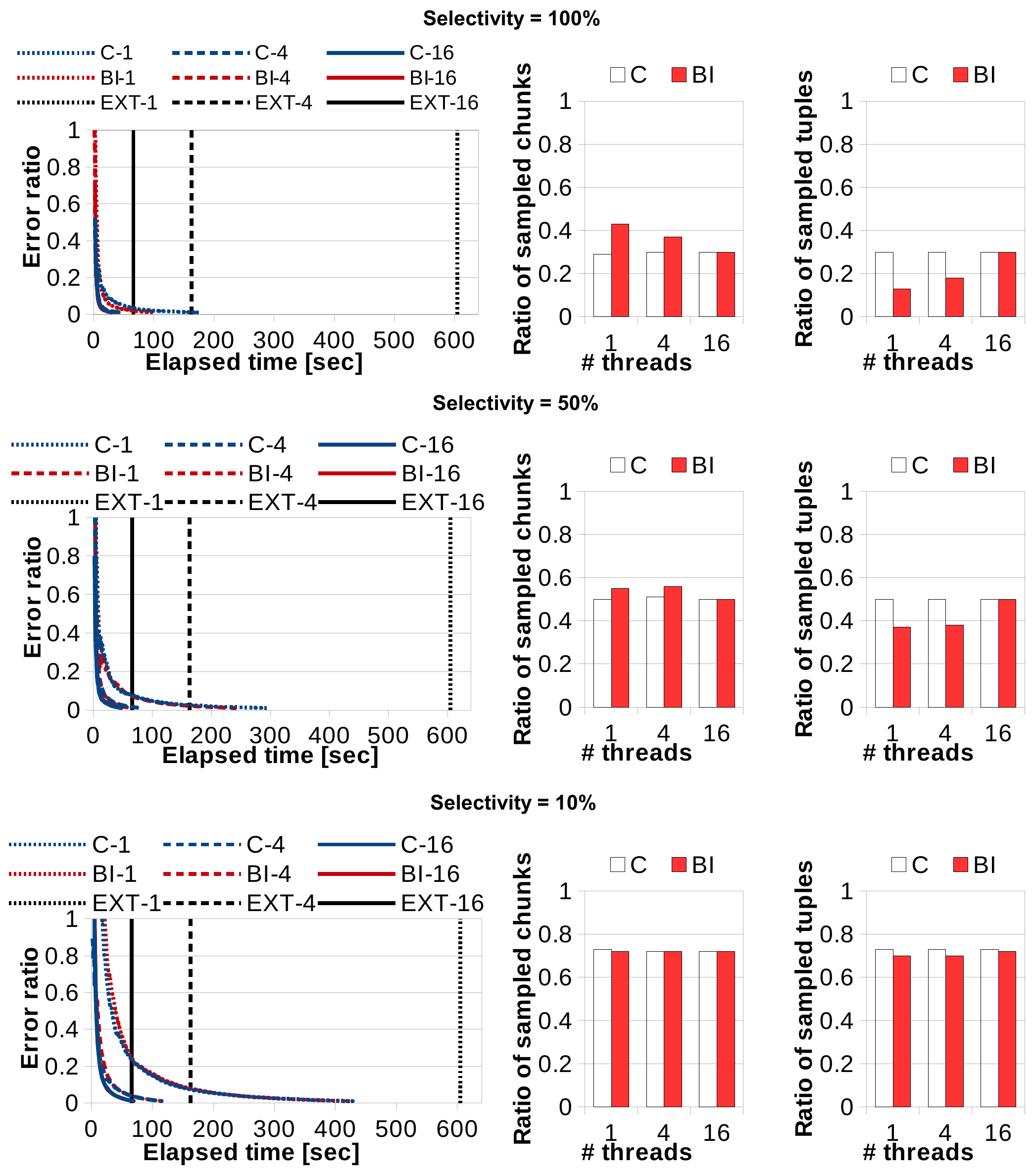}
		\caption{Results on the \texttt{synthetic} dataset.}
		\label{fig:synthetic}
	\end{center}
\end{figure*}

\textbf{Synthetic.}
The \texttt{synthetic} dataset contains randomly generated integers smaller than 1 billion grouped in tuples of 16 columns. Each column is generated using a different zipfian distribution. The parameter of the zipfian distribution is set at equal intervals across the range $[0, 4)$, i.e., $0, 0.25, 0.5, \dots, 3.75$, which gives columns having distributions ranging from uniform to extremely skewed. This allows for a thorough sampling evaluation. The query sums up a linear expression of the 16 attributes while applying the selection predicate to the uniformly distributed attribute $A_{1}$. This guarantees that the number of tuples in the result is correlated with the selectivity. The accuracy is set to $95\%$ as for the other datasets.

\begin{minipage}{.35\textwidth}
\begin{sql}
SELECT SUM($\sum_{i}{c_{i}\cdot A_{i}}$)
FROM synthetic
WHERE $A_{1} < \frac{\text{selectivity}}{100} \cdot 10^{9}$
\end{sql}
\end{minipage}\hfill
\begin{minipage}{.6\textwidth}
The results are depicted in Figure~\ref{fig:synthetic}. While they follow the same trend as for \texttt{ptf-csv}, there are several important differences. First, the number of chunks required for estimation is considerably lower. This is because the chunks are less heterogeneous.
\end{minipage}\hfill
The result is faster convergence, especially for the I/O-bound settings with $4$ and $16$ threads. Second, the difference between \texttt{BI} and \texttt{C} is less clear. While \texttt{BI} processes more chunks, \texttt{C} processes more tuples. However, due to data homogeneity inside chunks, both methods converge similarly because the time to read chunks in \texttt{BI} is matched by the time to process the additional tuples in \texttt{C}. Third, sampling methods are much faster than external tables across all the configurations. Again, this can be explained by the dependency of estimation on the data. \texttt{synthetic} is more amenable to estimation than the real datasets. This helps chunk-level sampling in particular.

\begin{figure*}[htbp]
	\begin{center}
		\includegraphics[width=.95\textwidth,height=1.82in]{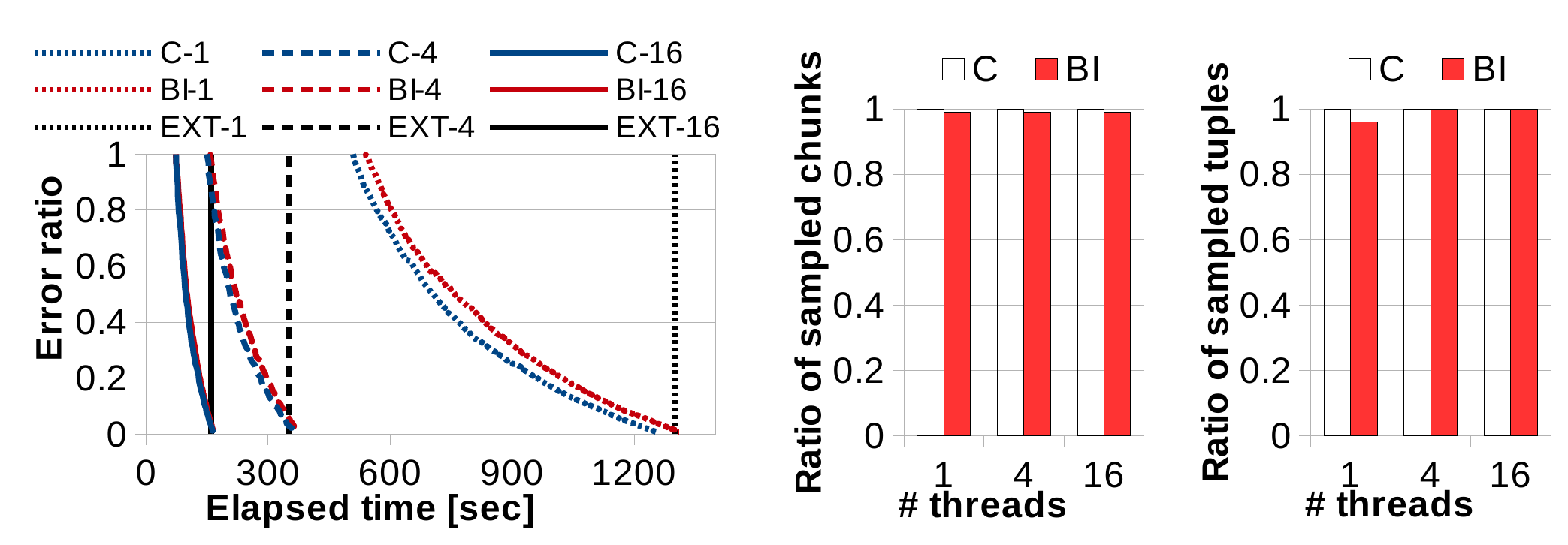}
		\caption{Results on the \texttt{wiki} dataset.}
		\label{fig:wiki}
	\end{center}
\end{figure*}

\textbf{Wiki.}
This experiment replicates the evaluation on the same dataset in~\cite{online-mapreduce}. Given the focus on a shared-memory architecture, we use a data sample having the size corresponding to a single node in~\cite{online-mapreduce}---one month instead of six months. The query counts the number of Wikipedia page hits on a per-language basis. \texttt{GROUP BY} is implemented by a separate query for each language. We present results for \texttt{en}---the results for other languages are similar.

\begin{minipage}{.25\textwidth}
\begin{sql}
SELECT COUNT(hits)
FROM wiki
GROUP BY language
\end{sql}
\end{minipage}\hfill
\begin{minipage}{.7\textwidth}
The results are depicted in Figure~\ref{fig:wiki}. The main point is that accurate estimation takes as long as the execution of \texttt{EXT} both for \texttt{BI} and \texttt{C}. This is explained by the large number of chunks that have to be inspected---almost $100\%$. \end{minipage}\hfill
The reason for such a large number of chunks is the low selectivity of the query. The number of tuples for a language is very small inside each chunk and this decreases the variance at a low rate. Parallelism improves the convergence rate significantly in this case---data are in CSV format.

\subsubsection{Parallel Sampling Comparison}\label{sssec:experiments:results:sampling}

Figure~\ref{fig:comparison} depicts a comparison between the parallel bi-level sampling methods discussed in the paper -- holistic (\texttt{H}), single-pass (\texttt{S}), resource-aware (\texttt{BI}) -- and chunk-level sampling (\texttt{C}) over the \texttt{synthetic} dataset. The query used in the evaluation sums up a linear expression of the 16 attributes in the dataset, without any selectivity. Since our goal is to emphasize the difference between methods, we zoom the figures on the area where the difference is more accentuated. For CPU-bound settings -- 1 and 4 thread(s) -- \texttt{S} and \texttt{BI} are achieving the fastest error reduction. This is because they stop as soon as the required accuracy for a chunk is reached. \texttt{H} and \texttt{C} -- on the other hand -- do not stop until the complete chunk is extracted. \texttt{BI} incurs additional delay over \texttt{S} due to more frequent convergence checking. Although \texttt{H} produces estimates earlier, it is more sensitive to fluctuations in the chunk-level estimate, especially when there is no parallelism. When multiple threads are available, the step behavior of \texttt{C} can be clearly detected due to chunk processing inversion. A slow chunk delays estimation for all the subsequent faster chunks due to the inspection paradox. When an estimate can be finally computed, there is a steep decrease in the error. For I/O-bound settings with 16 threads, \texttt{BI} reduces to \texttt{C}, while \texttt{S} becomes the worst because it stops processing as soon as the required chunk accuracy is reached. \texttt{BI} identifies that sufficient CPU resources are available and continues to extract tuples. It also updates the timing parameter $t^{\textit{eval}}$ to reduce the estimation frequency. This is not the case for \texttt{H}, thus the larger delay. The main takeaway is that \texttt{BI} is always transforming into the best (or almost the best) strategy, independent of the resource constraints.

\begin{figure*}[htbp]
	\begin{center}
		\subfloat[]{\includegraphics[width=2.07in,height=1.75in]{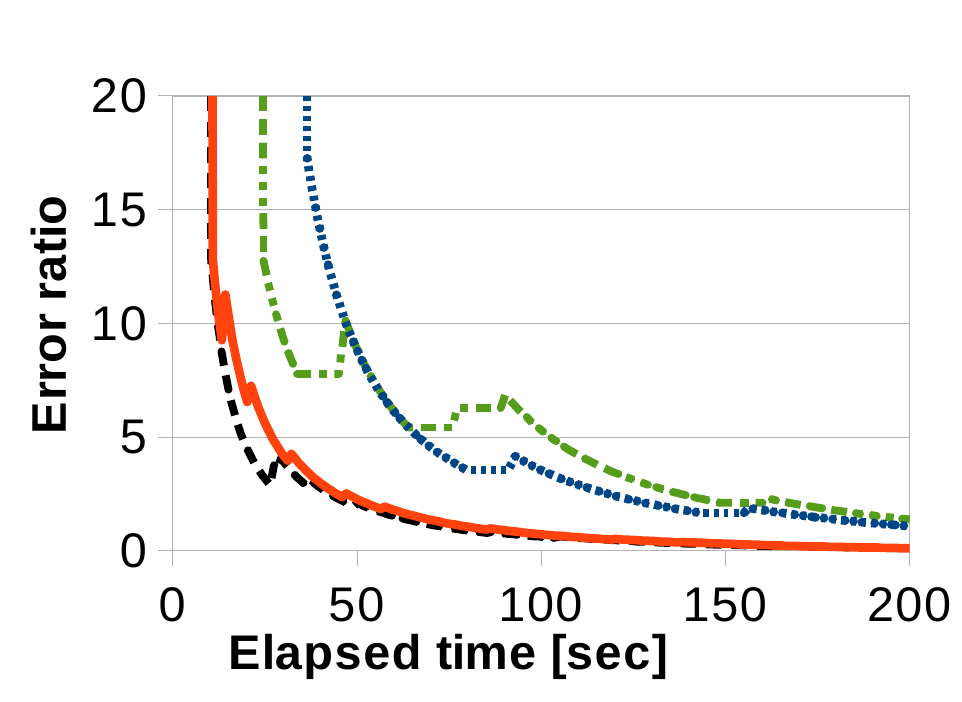}\label{fig:comparison-t1}}
		\subfloat[]{\includegraphics[width=2.07in,height=1.75in]{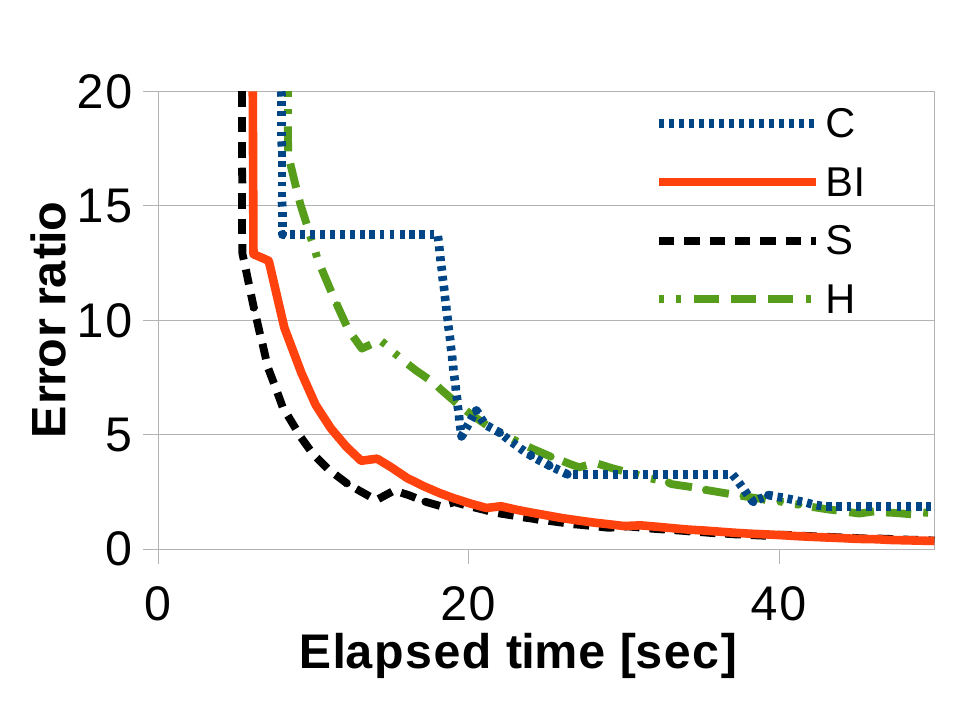}\label{fig:comparison-t4}}
		\subfloat[]{\includegraphics[width=2.07in,height=1.75in]{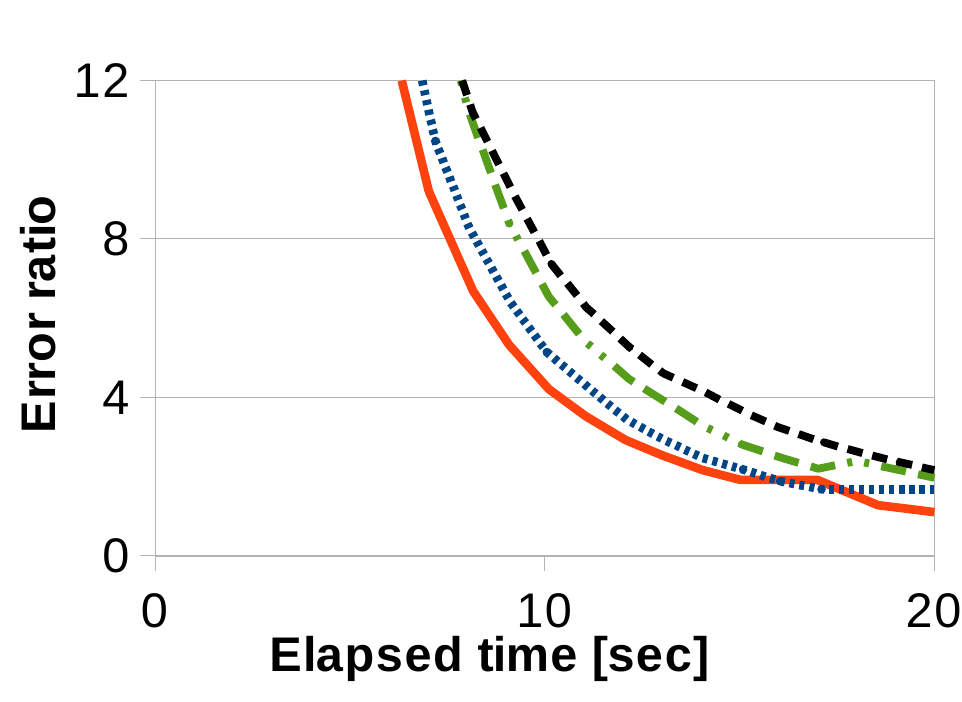}\label{fig:comparison-t16}}
		\caption{Sampling comparison as a function of the number of threads: 1 thread (a), 4 threads (b), and 16 threads (c).}
		\label{fig:comparison}
	\end{center}
\end{figure*}

\subsubsection{Sample Synopsis}\label{sssec:experiments:results:synopsis}

The effect of the sample synopsis on the execution of a query sequence is depicted in Figure~\ref{fig:sequence-increase} for the \texttt{synthetic} dataset. We use the resource-aware bi-level sampling \texttt{BI}. We execute 10 instances of the same query at 5 increasing accuracy requirements---each query is executed twice with the same accuracy. The execution time to achieve the required accuracy, the ratio of sampled chunks, and the ratio of sampled tuples are computed for two synopsis sizes---16 MB and 48 MB. These store samples of $0.08\%$ and $0.24\%$ ratios from the entire dataset. The execution time for the first query is considerably larger due to accessing almost all the chunks. Since the subsequent queries can be answered based on the synopsis, their execution is much faster. In the best case, the query can be answered exclusively from the synopsis and no access to the disk is necessary. This happens for the large synopsis. The small synopsis has to expel tuples from sampled chunks in order to make space for new chunks. As a result, as the accuracy requirement increases, the same chunk has to be accessed multiple times from disk. This results in a significant increase in the execution time.

\begin{figure*}[htbp]
	\begin{center}
		\includegraphics[width=0.99\textwidth]{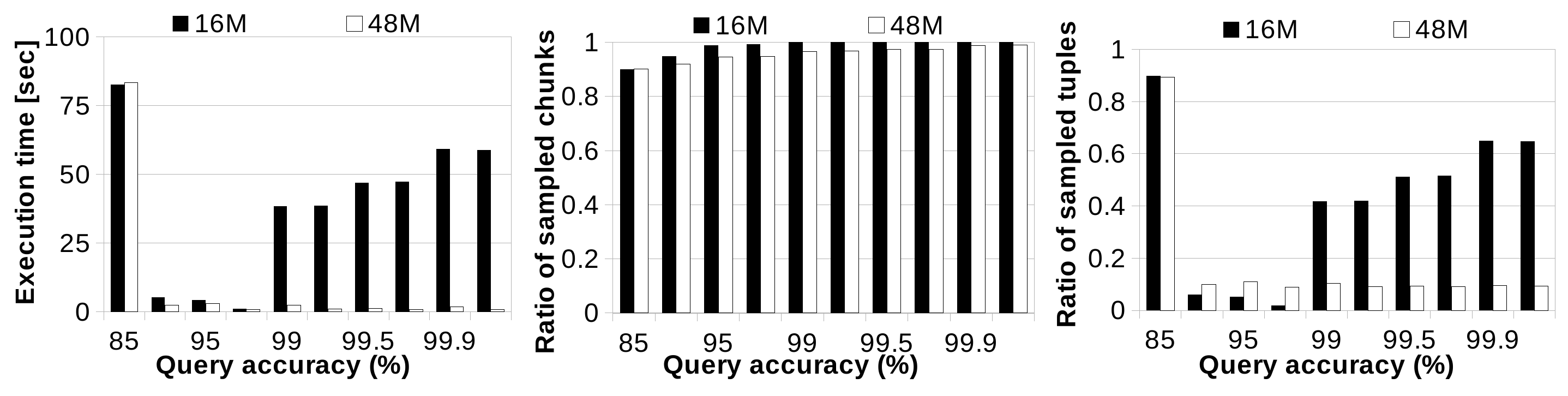}
		\caption{Sample synopsis effect on a sequence of queries with increasing accuracy. For each accuracy level, the query is executed two times.}
		\label{fig:sequence-increase}
	\end{center}
\end{figure*}

Figure~\ref{fig:sequence-decrease} depicts the impact of the sample synopsis on a sequence of queries with decreasing accuracy requirement. The setup is exactly the same as in Section~\ref{sssec:experiments:results:synopsis} with the difference that the accuracy is in decreasing order. In this case, the large synopsis is built entirely during the first query and all the subsequent queries are answered exclusively from the synopsis. This is the best situation for using the synopsis. The small synopsis does not contain sufficient tuples to support estimation for queries with accuracy larger than $95\%$. As a result, access to the raw data is necessary to achieve the required accuracy. This is reflected in the larger execution time and the larger number of tuples that are inspected. Even in this case, though, the synopsis provides a significant reduction in the execution time.

\begin{figure*}[htbp]
	\begin{center}
		\includegraphics[width=.95\textwidth,height=1.82in]{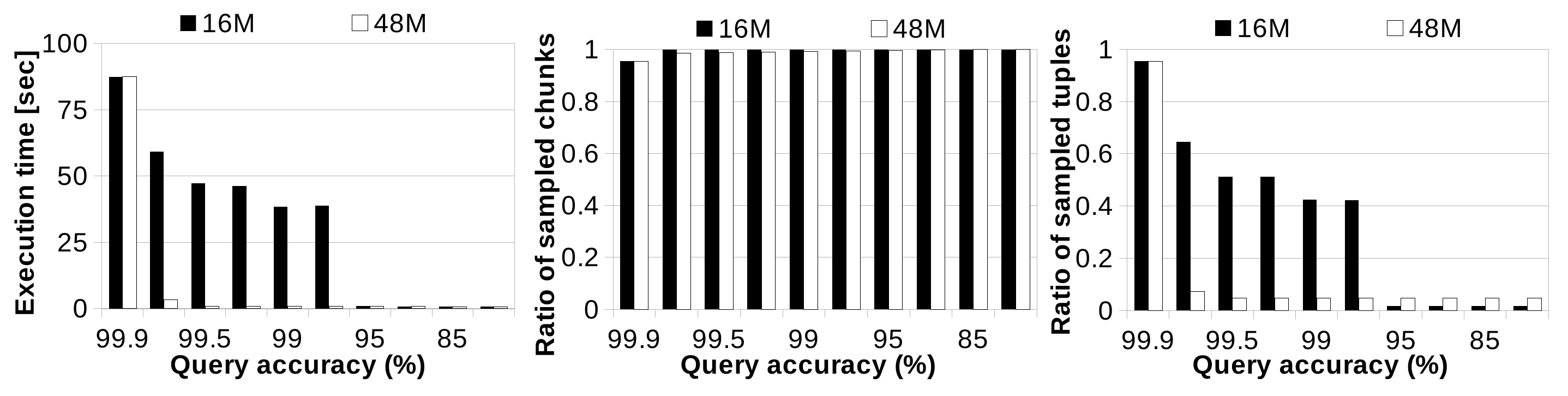}
		\caption{Sample synopsis effect on a sequence of queries with decreasing accuracy. For each accuracy, the query is executed two times.}
		\label{fig:sequence-decrease}
	\end{center}
\end{figure*}

\subsubsection{Resource Utilization}\label{sssec:experiments:results:resource}

Figure~\ref{fig:resource-utilization} depicts the CPU and I/O utilization for resource-aware bi-level and chunk-level sampling, respectively, in a CPU-bound scenario. In this case, the earlier a chunk is finished, the better. \texttt{BI} achieves this by extracting only the necessary number of tuples. This results in a much better I/O utilization than \texttt{C}. Since \texttt{C} has to process all the tuples in the chunk, it blocks chunk reading completely until CPUs become available. Essentially, the I/O throughput transitions periodically between two extremes based on how chunks finish processing. The processing time for a chunk is more random in the case of \texttt{BI} because it depends on the accuracy rather than the number of tuples.

\begin{figure*}[htbp]
	\begin{center}
		\subfloat[]{\includegraphics[width=0.48\textwidth]{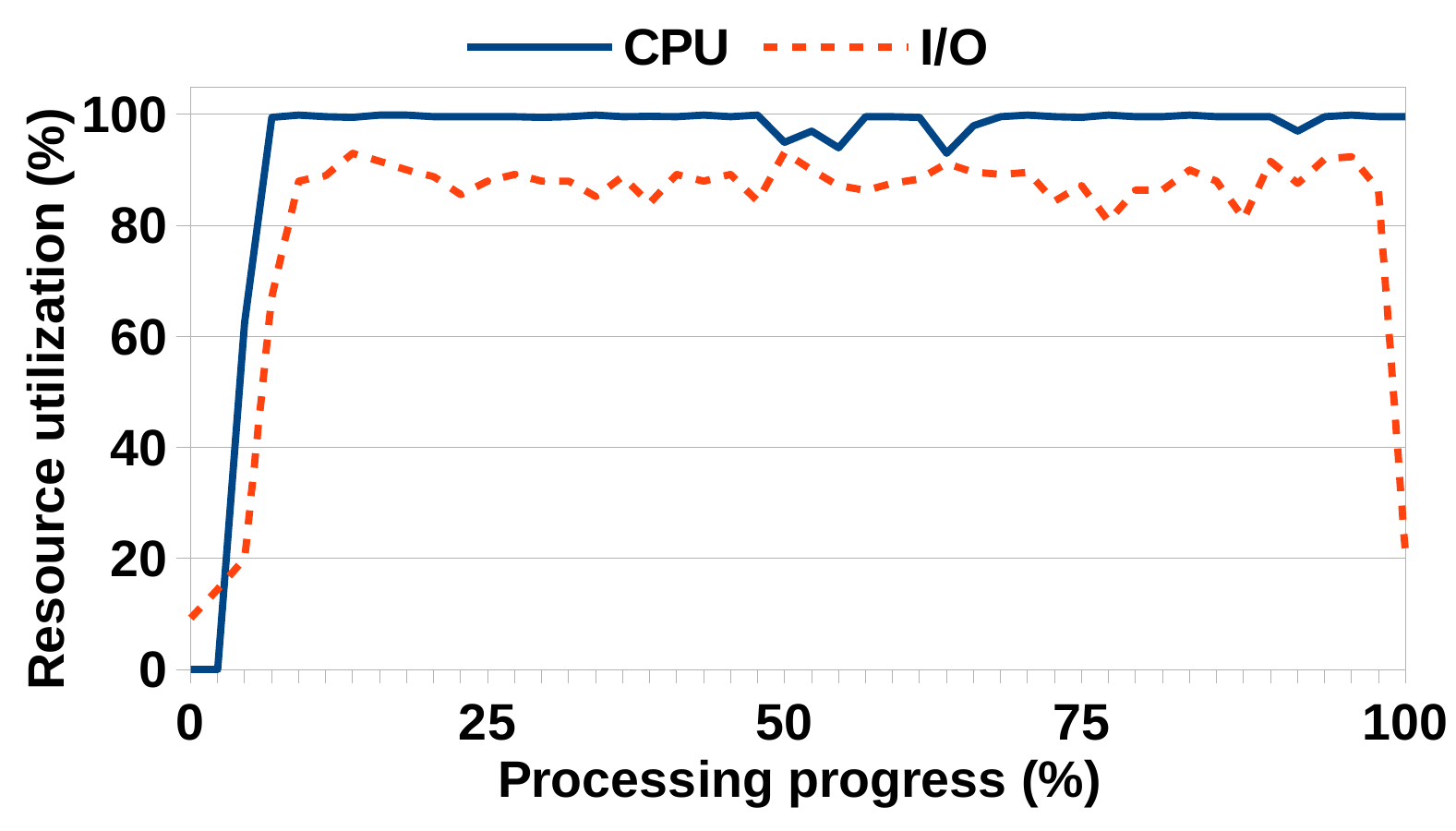}\label{fig:res-olaraw}}
		\subfloat[]{\includegraphics[width=0.48\textwidth]{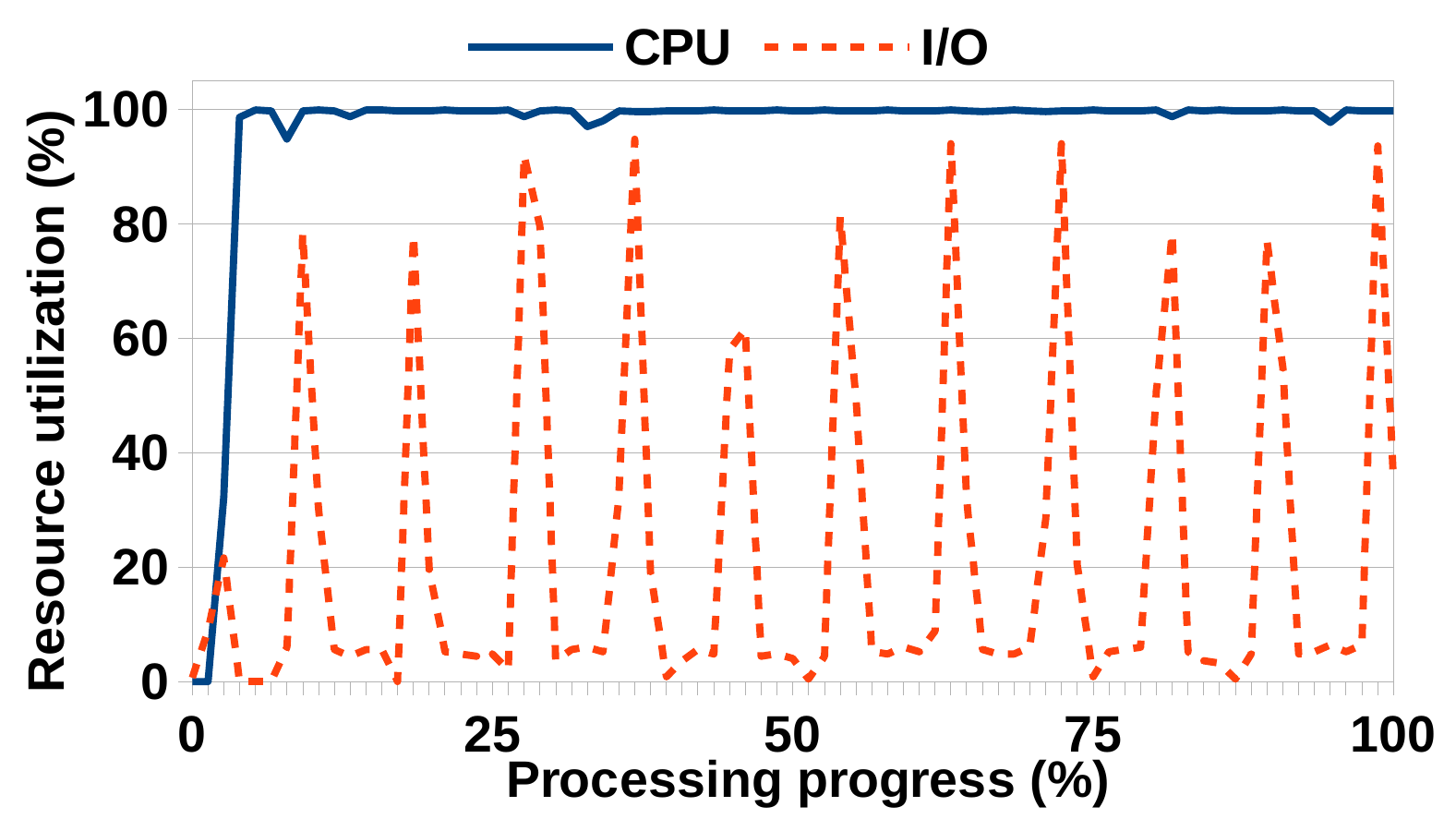}\label{fig:res-cluster}}
		\caption{CPU and I/O utilization for resource-aware bi-level (a) and chunk-level (b) sampling.}
		\label{fig:resource-utilization}
	\end{center}
\end{figure*}

\subsubsection{Confidence Bounds Correctness}\label{conf-bounds}

In order to verify that the confidence bounds produced for bi-level sampling are correct, we execute a series of Monte Carlo simulations over the \texttt{synthetic} dataset at $50\%$ selectivity. We include chunk-level sampling in the simulations as well, however, we do not enforce chunk reordering for estimation. This lets chunk-level sampling vulnerable to the inspection paradox. We facilitate the conditions for this to happen by setting the number of threads to $4$ because in this regimen processing transitions between CPU-bound and I/O-bound.

Table~\ref{tbl:exp:montecarlo} contains the results of $100$ simulations. We measure the number of times the correct result is within the computed bounds at $95\%$ accuracy after a certain fraction of chunks are processed. The results show that the bounds for bi-level sampling contain the true query answer in $95$ or more of the runs. The only exception is for a very small number of chunks when bound derivation is difficult because the heterogeneity between chunks cannot be accurately assess. In the case of chunk-level sampling without reordering, the effect of the inspection paradox is immediately clear. The more chunks are processed, more chances for inversions to take place. This results in inconsistent bounds. Nonetheless, once a sufficiently large number of chunks are processed, the bounds stabilize---not near the expected value, though. 

\begin{table}[htbp]
  \begin{center}
    \begin{tabular}{l|rrrrrrr}

     & \multicolumn{7}{c}{\textbf{Fraction of processed chunks}} \\

	 & .02 & .03 & .04 & .05 & .10 & .20 & .30 \\
	\hline
	
	Chunk-level & .92 & .89 & .82 & .78 & .75 & .78 & .71\\
	Bi-level & .94 & .96 & .98 & .95 & .97 & .98 & .99\\

    \end{tabular}
  \end{center}

\caption{Fraction of $95\%$ accuracy confidence bounds that contain the correct query result.}\label{tbl:exp:montecarlo}
\end{table}

\subsection{Discussion}\label{ssec:experiments:discussion}

The experimental results confirm that OLA-RAW bi-level sampling outperforms external tables and chunk-level sampling. In the best case, OLA-RAW achieves the required accuracy in as little as $10\%$ of the time required by external tables to answer the query exactly and chunk-level sampling to achieve the same accuracy. This happens in a heavily CPU-bound setting where the bottleneck is the number of processed tuples. The worst scenario for OLA-RAW is a low selectivity query in an I/O-bound context with a sufficiently large number of threads. Due to the sampling overhead, both bi-level and chunk-level sampling are slower than external tables. The results also prove that resource-aware bi-level sampling is the most adaptable scheme to the processing environment while a sample synopsis of less than $1\%$ in size can provide a reduction of more than $10X$ in execution time across a sequence of correlated queries.

\eat{
\subsection{Discussion}\label{ssec:experiments:discussion}
The experimental results confirm the benefits of the \texttt{OLA-RAW} successfully integrate Online aggregation for in-situ data processing. Parallel execution at chunk granularity results in linear speedup for CPU-bound tasks. \texttt{SCANRAW} with speculative loading achieves optimal performance across a sequence of queries at any point in the execution. It is similar to external tables for the first query and more efficient than database processing in the long run. Moreover, \texttt{SCANRAW} makes full data loading efficient to the point where database processing -- with pre-loading -- achieves better overall execution time than external tables even for a two-query sequence. While the time distribution is split almost equally between I/O and CPU-intensive pipeline stages when the number of columns in the file is small, CPU-intensive stages -- \texttt{TOKENIZE} and \texttt{PARSE} -- account for more than 80\% of the time to process a chunk when the raw file contains a large number of numeric attributes. By overlapping processing across multiple chunks and between stages, \texttt{SCANRAW} makes even this type of execution I/O-bound. This guarantees optimal resource utilization in the system. Due to parallel conversion from text to binary, \texttt{SCANRAW} outperforms BAMTools by a factor of 7 while processing a file 5 times larger.
}

\section{RELATED WORK}\label{sec:rel-work}

Our main contribution is to integrate online aggregation in raw data processing. Thus, these two lines of research -- raw data processing and online aggregation -- are the most relevant to our work. We discuss the relationship between OLA-RAW bi-level sampling and these two research areas in the following.

\textbf{Raw data processing.}
At a high level, we can group raw data processing into two categories. In the first category, we have extensions to traditional database systems that allow raw file processing inside the execution engine. Examples include external tables~\cite{oracle:external-tables,datastage,impala} and various optimizations that eliminate the requirement for scanning the entire file to answer the query~\cite{files-queries-results,nodb,data-vaults}. Modern database engines -- e.g., Oracle, MySQL, Impala -- provide external tables as a feature to directly query flat files using SQL without paying the upfront cost of loading the data into the system. NoDB~\cite{nodb} enhances external tables by extracting only the attributes required in the query and caching them in memory for use in subsequent queries. Data vaults~\cite{data-vaults} and SDS/Q~\cite{blanas:sigmod-2014} apply the same idea of query-driven just-in-time caching to scientific repositories. While the proposed sample synopsis inherits in-memory caching, it caches samples rather than full columns. Adaptive partial loading~\cite{files-queries-results} materializes the cached data in NoDB to secondary storage before query execution starts---loading is query-driven. SCANRAW~\cite{scanraw,scanraw:tods} is a super-scalar adaptive external tables implementation that materializes data only when I/O resources are available. Instant loading~\cite{instant-loading} introduces vectorized SIMD implementations for tokenizing. RAW~\cite{raw} and its extensions VIDa~\cite{vida,vida-llvm} generate \texttt{EXTRACT} operators just-in-time for the underlying data file and the incoming query. The second category is organized around Hadoop MapReduce which processes natively raw data by including the \texttt{EXTRACT} code in the Map and Reduce functions. Invisible loading~\cite{invisible-loading} focuses on eliminating the \texttt{EXTRACT} code by loading the already converted data inside a database. While similar to adaptive partial loading, instead of saving all the tuples into the database, only a fraction of tuples are stored for every query. None of these solutions supports sampling over raw data and estimation---the central contribution of this work.

\textbf{Online aggregation.}
The database online aggregation literature has its origins in the seminal paper by Hellerstein et al.~\cite{ola}. We can broadly categorize this body of work into system design~\cite{control,dbo,demo:dbo,turbo:dbo}, online join algorithms~\cite{ripple-join,sms-join,pr-join,wander-join}, online algorithms for estimations other than join~\cite{ola-set,ola-extreme,feifei:spatial-ola}, and methods to derive confidence bounds~\cite{haas-bounds}. The distributed online aggregation literature is not as rich. Luo et al.~\cite{par-hash-ripple} extend the ripple join algorithm~\cite{ripple-join} to a distributed setting. Wu et al.~\cite{distributed-ola} extend online aggregation to distributed P2P networks. They introduce a synchronized sampling estimator over partitioned data that requires data movement from storage nodes to processing nodes. In subsequent work, Wu et al.~\cite{continuous-sampling} tackle online aggregation over multiple queries. The third piece of relevant work is online aggregation in MapReduce (Hadoop or Spark). BlinkDB~\cite{blink,agarwal:bootstrap} implements a multi-stage approximation mechanism based on pre-computed sampling synopses of multiple sizes, while EARL~\cite{earl} and ABS~\cite{zeng:bootstrap} use bootstrapping to produce multiple estimators from the same sample. iOLAP~\cite{iolap} models online aggregation as incremental view maintenance with uncertainty propagation. Sample+Seek~\cite{sample-seek} introduces measure-biased sampling to provide error guarantees for \texttt{GROUP BY} queries with many groups. Quickr~\cite{quickr} injects single-pass samplers in query execution plans to generate one-time approximate results. With almost no exceptions, all of this work is based on tuple-based sampling. The inadequacy of this type of sampling for database processing has been recognized in~\cite{chaudhuri:block-sampling} where cardinality estimators based on chunk-level sampling are introduced. This type of sampling is later applied to parallel online aggregation in Hadoop~\cite{hadoop-online,online-mapreduce}. The only application of bi-level Bernoulli sampling to database processing is given in~\cite{bi-level-bernoulli}. The last two solutions are the closest to OLA-RAW and they are discussed in detail throughout the paper. Our main contribution to online aggregation is the design of parallel bi-level sampling estimators that avoid the inspection paradox.

\section{CONCLUSIONS AND FUTURE WORK}\label{sec:conclusions}

In this paper, we present OLA-RAW, a bi-level sampling scheme for parallel online aggregation over raw data. OLA-RAW sampling is query-driven and performed exclusively in-situ during query execution, without data reorganization. In order to avoid the expensive conversion cost, OLA-RAW builds and maintains incrementally a memory-resident bi-level sample synopsis. We implement OLA-RAW inside a modern in-situ data processing system and evaluate its performance across several real and synthetic datasets and file formats. Our results confirm that OLA-RAW bi-level sampling outperforms external tables and chunk-level sampling -- by as much as $10X$ -- and leads to a focused data exploration process that avoids unnecessary work and discards uninteresting data. In future work, we plan to perform a thorough investigation of the estimator sensitivity to the chunk size. This is a well-known problem for bi-level sampling. Adaptive solutions that change the chunk size dynamically at runtime are an interesting direction to pursue.

{\small
\bibliographystyle{abbrv}

}




\end{document}